\newenvironment{proof}{{\it Proof:}}{\hfill$\blacksquare$}
\newtheorem{lemma}{Lemma}
\newtheorem{definition}{Definition}
\begin{document}

\title{Joint RIS-UE Association and Beamforming Design in RIS-Assisted Cell-Free MIMO Network }

\author{ Hongqin Ke, Jindan Xu,~\IEEEmembership{Member,~IEEE,} Wei Xu,~\IEEEmembership{Fellow,~IEEE,} \\ Chau Yuen,~\IEEEmembership{Fellow,~IEEE,} and Zhaohua Lu
\thanks{Hongqin Ke and Wei Xu are with the National Mobile Communications Research Lab, Southeast University, Nanjing 210096, China; Wei Xu is also with Purple Mountain Laboratories, Nanjing 211111, China (e-mail: {kehongqin, wxu}@seu.edu.cn).}
\thanks{Jindan Xu and Chau Yuen are with the School of Electrical and Electronics Engineering, Nanyang Technological University, Singapore 639798, Singapore (e-mail: jindan.xu@ntu.edu.sg, chau.yuen@ntu.edu.sg).}
\thanks{Zhaohua Lu is with ZTE Corporation, Shenzhen 122008, China, and also with the State Key Laboratory of Mobile Network and Mobile Multimedia Technology, Shenzhen 518057, China (e-mail: lu.zhaohua@zte.com.cn).}}

\markboth{}%
{Shell \MakeLowercase{\textit{et al.}}: A Sample Article Using IEEEtran.cls for IEEE Journals}


\maketitle

\begin{abstract}
\fontsize{12}{15}\selectfont
Reconfigurable intelligent surface (RIS)-assisted cell-free (CF) multiple-input multiple-output (MIMO) networks can significantly enhance system performance. However, the extensive deployment of RIS elements imposes considerable channel acquisition overhead, with the high density of nodes and antennas in RIS-assisted CF networks amplifying this challenge. To tackle this issue, in this paper, we explore integrating  RIS-user equipment (UE) association into downlink RIS-assisted CF transmitter design, which greatly reduces the channel acquisition costs. The key point is that once UEs are associated with specific RISs, there is no need to frequently acquire channels from non-associated RISs. Then, we formulate the problem of joint RIS-UE association and beamforming at APs and RISs to maximize the weighted sum rate (WSR). In particular, we propose a two-stage framework to solve it. In the first stage, we apply a many-to-many matching algorithm to establish the RIS-UE association. In the second stage, we introduce a sequential optimization-based method that decomposes the joint optimization of RIS phase shifts and AP beamforming into two distinct subproblems. To optimize the RIS phase shifts, we employ the majorization-minimization (MM) algorithm to obtain a semi-closed-form solution. For AP beamforming, we develop a joint block diagonalization algorithm, which yields a closed-form solution. Simulation results demonstrate the effectiveness of the proposed algorithm and show that, while RIS-UE association significantly reduces overhead, it incurs a minor performance loss that remains within an acceptable range. Additionally, we investigate the impact of RIS deployment and conclude that RISs exhibit enhanced performance when positioned between APs and UEs.
 
\end{abstract}

\begin{IEEEkeywords}
\fontsize{12}{15}\selectfont
Cell-free (CF) network, reconfigurable intelligent surface (RIS),  beamforming design, RIS-UE association, resource allocation.
\end{IEEEkeywords}

\section{Introduction}
\IEEEPARstart{D}{riven} by promising advancements such as massive multiple-input multiple-output (mMIMO) and ultra-dense networks (UDN) \cite{5G}, next-generation wireless communication networks are anticipated to deliver not only higher data rates but also reduced latency, enhanced reliability, and greater connectivity to support emerging applications \cite{MIMOnew, AI}. Achieving these goals necessitates the deployment of a substantial number of co-located antennas and base stations to enable mMIMO and UDN, though this approach entails significant costs and increased power consumption. Furthermore, as cell density rises, inter-cell interference becomes more pronounced, exerting a serious impact on system performance \cite{9064545}. 

Fortunately, cell-free (CF) MIMO, a novel user-centric network architecture, has been proposed to address these challenges \cite{9586055, ZTE}. Unlike conventional ultra-dense networks (UDNs), CF MIMO systems employ geographically distributed access points (APs) that collaborate to serve all user equipment (UEs) through coherent transmission, with all APs linked to a central processing unit (CPU) via fiber optics or wireless connections. Owing to its robust interference management, reduced deployment costs, and enhanced macro diversity, CF MIMO can significantly improve network coverage and system capacity \cite{cellfree3}. However, realizing the full performance potential of CF MIMO requires a large number of distributed APs, which, in turn, escalates both costs and power consumption—challenges reminiscent of those faced in UDNs. These concerns have become even more pressing given the increasing focus on environmentally sustainable and green communication principles.

The reconfigurable intelligent surface (RIS) has emerged as a promising technique among various candidates due to its unique attributes of low cost, low energy consumption, and programmability \cite{RIS1, xu2024RIS, xu2023reconfiguring}. With the aid of an intelligent controller, an RIS, composed of multiple passive elements, enhances communication by reprogramming incident signals from the AP and reflecting them toward the UE in a specified direction. Additionally, RISs can be easily integrated into existing communication scenarios and applications, offering the advantage of broad deployability due to their affordability and low power requirements \cite{KHQ_ris}.

Building on the discussions above, the RIS-assisted CF MIMO system has sparked a surge of research interest due to the distinct advantages offered by RISs \cite{shi2023ris}. Numerous studies have focused on optimizing beamforming designs for both APs and RISs, evaluating their impact across various communication metrics, such as energy efficiency (EE) and spectral efficiency (SE). For example, the global maximization of EE has been extensively investigated \cite{energy1, energy2, energy3}. In particular, \cite{energy1} addresses a power consumption model tailored for discrete RIS phase shifts, \cite{energy2} considers constraints arising from limited backhaul capacity, and \cite{energy3} integrates an energy consumption model suited for wideband systems. On another front, \cite{twotimescale} examines network capacity maximization through a two-timescale scheme designed to minimize overhead and computational complexity, while \cite{yao_robust_2023} studies the maximization of the worst-case sum rate under uncertain channel state information (CSI), especially in scenarios with constrained backhaul capacity. Beyond these optimization-oriented studies, recent research has also investigated innovative applications of RIS-assisted CF systems, including unmanned aerial vehicles (UAVs) \cite{UAV1, UAV2}, physical layer security (PLS) \cite{PLS}, and wireless energy transfer (WET) \cite{WET}. For instance, \cite{UAV2} proposes leveraging RIS to enhance UAV communication while maintaining or even improving the downlink rate for ground UEs in the CF network. Likewise, in \cite{PLS}, the secrecy performance of the system is analyzed in the presence of multiple active eavesdroppers. Taken together, these findings, validated through both theoretical analysis and simulation, suggest that integrating RIS with CF architectures can yield significant and reliable performance gains.

It is worth noting that the majority of the aforementioned studies, including \cite{energy1,energy2,energy3,yao_robust_2023,UAV1,UAV2,PLS,WET}, assume an idealized scenario where multiple RISs serve each UE simultaneously. However, this approach proves impractical, as the large number of RIS elements results in increased channel dimensions and significantly escalates the signaling overhead required for channel acquisition \cite{RIS_es1}.
 To mitigate channel acquisition overhead, implementing an association strategy between the RIS and the UE can be advantageous. Specifically, when channel conditions between an RIS and a UE are poor—such as when they are geographically distant or obstructed by significant barriers—the system gains minimal performance enhancement from the RIS. In such cases, establishing an RIS-UE association reduces the necessity for frequent channel acquisition for non-associated RIS-UE pairs, thereby optimizing the balance between system performance and signaling overhead. 

However, incorporating RIS-UE association into RIS-assisted CF systems presents several critical challenges, including the design of RIS-UE association and joint beamforming for APs and RISs. First, establishing associations between multiple RISs and multiple UEs requires careful consideration. In existing studies on RIS-UE association, \cite{twotimescale} proposes a linear conic relaxation algorithm to establish a many-to-one matching between multiple RISs and a single UE, while \cite{RISnum} employs a graph-theory-based approach for one-to-one matching between a RIS and a UE. Evidently, the concept of many-to-many matching between multiple RISs and UEs remains underexplored in RIS-assisted CF systems. More critically, the presence of numerous nodes and antennas in RIS-assisted CF systems introduces significant complexity to joint beamforming. Moreover, previous studies have largely overlooked a comprehensive beamforming design that accommodates multiple APs, RISs, and UEs, each with their respective multiple antennas \cite{energy1,energy2,energy3,yao_robust_2023,UAV1,UAV2,PLS,WET}. These considerations underscore the need for a more generalized and efficient beamforming design framework.

Inspired by these discussions, in this paper, we focus on improving the weighted sum rate (WSR) of a RIS-assisted CF MIMO system by jointly optimizing the RIS-UE association, the AP beamforming, and the RIS phase shifts. The main contributions of this paper are as follows:
\begin{itemize}
    \item {\textit{A general architecture of  RIS-assisted CF network with RIS-UE association:} This paper investigates a generalized downlink communication framework that leverages multiple RISs to assist in CF MIMO, encompassing multiple APs, RISs, UEs, and antennas to broaden the system’s applicability. To manage the significant channel acquisition overhead introduced by the large number of RIS elements, we implement an RIS-UE association strategy that sets an upper limit on the number of UEs each RIS can serve and the number of RISs each UE can connect to, enhancing the system’s practicality. Under the constraints of AP power, RIS element unit modulus, and RIS-UE matching, we formulate the maximization of the weighted sum rate as a mixed-integer nonlinear programming (MINLP) problem, jointly optimizing RIS-UE association, RIS phase shifts, and AP transmit beamforming.}
    \item {\textit{A low-complexity solution with closed-form expressions:} To address the inherent complexity of the MINLP problem, we propose an innovative two-stage framework that decomposes the original problem into several tractable subproblems. Specifically, in the first stage, we model RIS-UE association as a many-to-many matching problem and introduce an efficient, low-complexity algorithm to establish these associations. Additionally, we develop an Majorization-Minimization (MM)-based approach to manage the non-convex unit-modulus constraints in RIS phase shift design, with the MM algorithm applied iteratively in both the first and second stages. In the second stage, leveraging the optimal RIS-UE association and RIS phase shifts, we derive the optimal AP beamforming solution by joint block diagonalization (BD) with a bisection search. Notably, both the MM and BD algorithms yield closed-form solutions, which substantially enhances computational efficiency and enables rapid convergence. This framework not only provides a structured solution to a highly complex problem but also demonstrates practical applicability in RIS-assisted CF MIMO systems.}
    \item {\textit{Performance validation and analysis:} Comparisons with various benchmark schemes validate the effectiveness of the proposed algorithm for RIS-assisted CF MIMO systems. Simulation results demonstrate that while adopting RIS-UE association in RIS-assisted CF systems entails a minor performance trade-off, it substantially reduces channel acquisition overhead and remains within an acceptable performance loss range. Additionally, our scheme exhibits strong robustness against CSI errors. Finally, we offer insights and recommendations for optimal RIS deployment within the considered system. }
\end{itemize}

The remainder of this paper is organized as follows. In Section \ref{sectwo}, we describe the system model and the weighted sum rate maximization problem under AP power constraints, RIS modulus constraints, and RIS-UE matching constraints. In Section \ref{secthree}, we propose a two-stage framework to solve the optimization problem. In Section \ref{secfour}, we verify the convergence of the proposed algorithm and analyze the performance of the RIS-assisted CF MIMO system. Finally, our conclusions are given in Section \ref{secfive}.

\textit{Notations}: $a$, $\mathbf{a}$, and $\mathbf{A}$ denote scalar, vector, and matrix respectively. $\mathrm{Re} (\cdot )$ means to take the real part. ${\bf I}_N$ denotes an $N\times N$-dimensional identity matrix.  $\mathrm{diag} (\cdot )$ returns the diagonal matrix of the input vector and  $\mathrm{blkdiag} (\mathbf{A}_1,\dots,\mathbf{A}_n  )$ returns a block diagonal matrix created by $\mathbf{A}_1,\dots,\mathbf{A}_n$. $\lceil\cdot\rceil$ denotes upward rounding and $\otimes $ denotes Kronecker product. $\mathbf{A}^{\mathrm{\mathit{T} } } ,\mathbf{A}^{\mathrm{\mathit{H} } }, \mathbf{A}^{\mathrm{-1} },\mathbf{A}^{\mathrm{1/2}},\mathrm{Rank} (\mathbf{A}),\mathrm{Tr} (\mathbf{A})$ denote transpose, conjugate transpose, inverse, square root, rank, and trace of $\mathbf{A}$ respectively. $\left \| \cdot  \right \|_{F} $ is the Frobenius norm of matrix. $\left | \cdot  \right | $ means the length of a set or the determinant of a matrix. $\mathcal{C N}\left(\mathbf{0}, \mathbf{\Sigma} \right)$ represents the complex Gaussian distribution with zero mean and variance matrix.

\section{System Model}  \label{sectwo}

\begin{figure}[h]
    \centering
\includegraphics[width=0.7\columnwidth,height=0.49\linewidth]{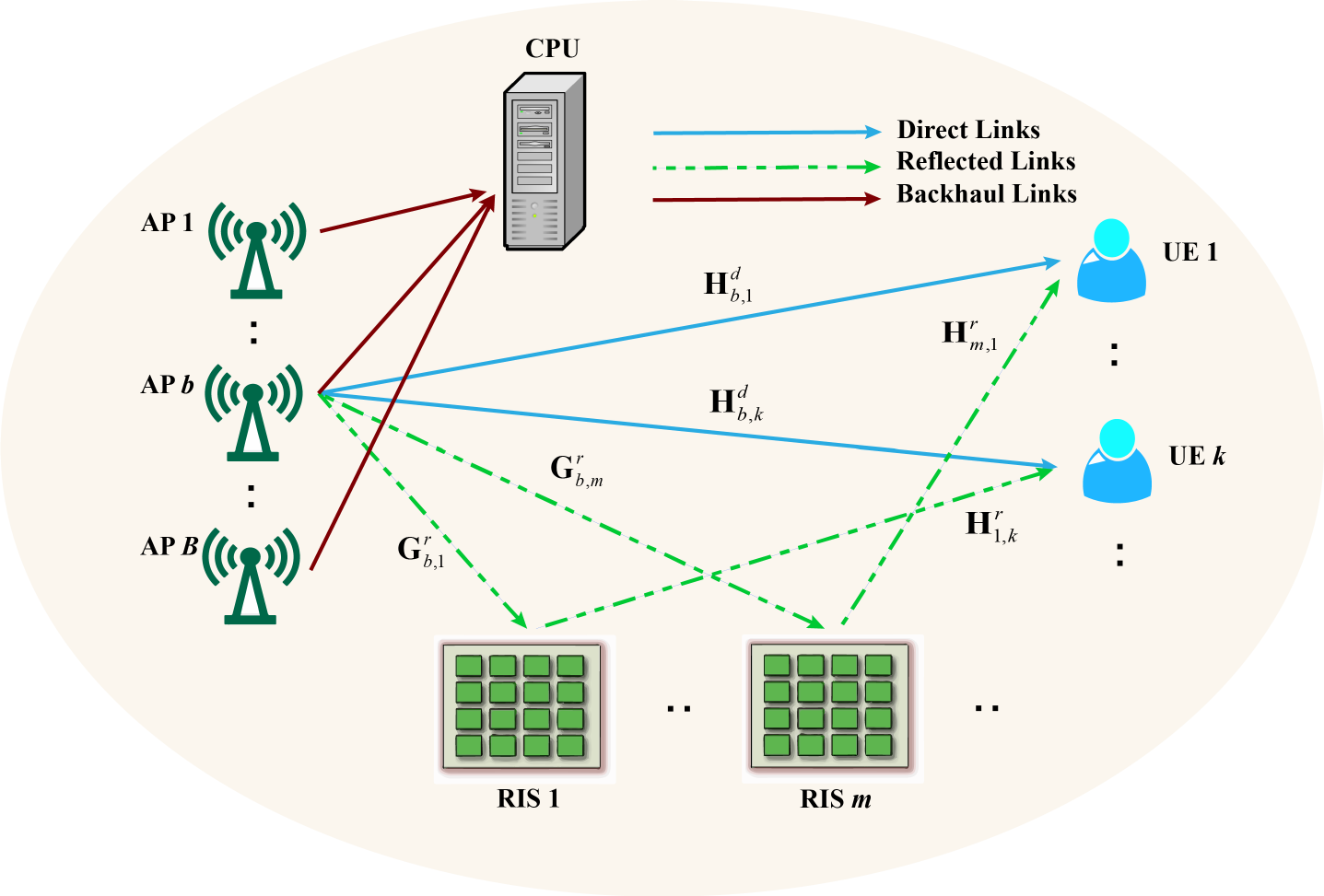}
    \caption{The downlink RIS-assisted CF-MIMO network}
    \label{system_fig}
\end{figure}
\subsection{Signal Model}

As depicted in Fig. \ref{system_fig}, we consider an RIS-assisted downlink cell-free system comprising $K$ multiple-antenna UEs, $B$ multiple-antenna APs, and $M$ RISs, where all APs and RISs are connected to the central processing unit (CPU) through  wireless backhaul or optical fiber. We define $\mathcal{B}=\left\{1,\ldots,B\right\}$, $\mathcal{K}=\left\{1,\ldots,K\right\}$, and $\mathcal{M}=\left\{1,\ldots,M\right\}$ as the index sets of APs, UEs, and RISs, respectively. Each RIS contains $N$ elements, while each AP and UE is equipped with $N_t$ and $N_r$ antennas, respectively. 

In the considered system, all APs cooperate to serve all UEs over the same time-frequency resources. {For simplicity, we assume that the total number of antennas in the UEs does not exceed that in the APs, i.e., ${KN_r } \le{BN_t}$, to enable effective spatial multiplexing by the APs and ensure that each UE can achieve multi-stream transmission. If the system is overloaded, the APs can mitigate this by dynamically adjusting resource allocation, assigning UEs to different time- or frequency-resource blocks\cite{overload1}, or grouping nearly orthogonal UEs together for simultaneous data transmission\cite{overload2}, to meet the above assumption.}

We introduce a binary variable $c_{m,k}\in\left\{0,1\right\}$ to indicate whether the $m$-th RIS and $k$-th UE are matched, where $c_{m,k}=1$ signifies that the $m$-th RIS serves the $k$-th UE, and $c_{m,k}=0$ otherwise. {We assume that each RIS can serve up to $U_{\mathrm{match}}$ UEs, and each UE can be matched with a maximum of $R_{\mathrm{match}}$ RISs during the association process. These constraints can be expressed as  $U_{\mathrm{match}} \geq \sum_{k \in \mathcal{K}} c_{m, k} , \ \forall m \in \mathcal{M}$, and $R_{\mathrm{match}} \geq \sum_{m \in \mathcal{M}} c_{m, k} , \ \forall k \in \mathcal{K}$.
}

In a cell-free MIMO network with numerous geographically distributed APs, signals from different APs typically experience varying propagation delays, necessitating synchronization among all APs to enable coherent transmission. 
The signal transmitted by the $b$-th AP is given by: 
\begin{equation}
\mathbf{x}_b=\sum_{k=1}^{K}\mathbf{F}_{b,k}\mathbf{s}_k,
\end{equation}
where $\mathbf{s}_{k}\in\mathbb{C}^{N_s\times1}$ is the symbol transmitted to the $k$-th UE, satisfying $\mathbb{E}\left[\mathbf{s}_{k} \mathbf{s}_{i}^{H}\right]= \mathbf{0},\forall k\neq i$, and $\mathbb{E}\left[\mathbf{s}_{k} \mathbf{s}_{k}^{\mathit{H} }\right]=\mathbf{I}_{N_{s}}$. $N_s$ denotes the number of data streams for each UE, and
$\mathbf{F}_{b,k} \in \mathbb{C}^{N_{t} \times N_{s}}$ is the linear precoding matrix used by the $b$-th AP for the $k$-th UE. Let $\mathbf{H}_{b,k}^d\in\mathbb{C}^{N_r\times N_t}$, $\mathbf{H}_{m,k}^r\in\mathbb{C}^{N_r\times N}$, and $\mathbf{G}_{b,m}^r\in\mathbb{C}^{N\times N_t}$ represent the channels between the $b$-th AP and $k$-th UE, the $m$-th RIS and $k$-th UE, and the $b$-th AP and $m$-th RIS, respectively. 

The received signal at the $k$-th UE can be written as 
\begin{equation}
\mathbf{y}_k=\sum_{b=1}^{B}\mathbf{H}_{b,k}^d\mathbf{x}_b+\sum_{b=1}^{B}{\sum_{m=1}^{M}{c_{m,k}\mathbf{H}_{m,k}^r\mathbf{\Phi}_m}\mathbf{G}_{b,m}^r}\mathbf{x}_b+\mathbf{n}_k,
\end{equation}
where $\mathbf{n}_k \sim \mathcal{CN}\left(\mathbf{0},\sigma^2\mathbf{I}_{N_r}\right)$ represents the additive white Gaussian noise with zero mean and variance $\sigma^2\mathbf{I}_{N_r}$, and $\mathbf{\Phi}_m\in\mathbb{C}^{N\times N}$ is the phase shift matrix of the $m$-th RIS, which is expressed as 
\begin{equation}
\mathbf{\Phi}_m=\mathrm{diag}\left(\phi_{m,1},\cdots,\phi_{m,n},\cdots,\phi_{m,N}\right),
\end{equation}
where $\phi_{m,n}=e^{j\theta_{m,n}},\forall m,n$, and $0\le\theta_{m,n}\le2\pi$. 

To facilitate further processing, define $\mathbf{H}_k^d=\left[\mathbf{H}_{1,k}^d,\ldots,\mathbf{H}_{B,k}^d\right]\in\mathbb{C}^{N_r\times B N_t}$ as the direct channel from all APs to the $k$-th UE, and $\mathbf{G}_m^r=\left[\mathbf{G}_{1,m}^r,\ldots,\mathbf{G}_{B,m}^r\right]\in\mathbb{C}^{N\times B N_t}$ as  the aggregate channel from all APs to the $m$-th RIS.
The equivalent channel from all APs to the $k$-th UE can then be expressed as: 
\begin{equation}
    \bar{\mathbf{H}}_{k} = \mathbf{H}_{k}^{d}+\sum_{m=1}^{M} c_{m,k}\mathbf{H}_{m, k}^{r} \boldsymbol{\Phi}_{m} \mathbf{G}_{m}^{r},
\end{equation}
where $\bar{\mathbf{H}}_{k} \in \mathbb{C}^{N_{r} \times B N_{t}}$. 

Then, the received signal at the $k$-th UE can be rewritten as 
\begin{equation}
\mathbf{y}_{k} = \underbrace{\bar{\mathbf{H}}_{k} \mathbf{F}_{k} \mathbf{s}_{k}}_{\text {Desired signal }} \\
+\underbrace{\sum_{i \neq k}^{K} \bar{\mathbf{H}}_{k} \mathbf{F}_{i} \mathbf{s}_{i}}_{\text {Intra interference }}+\underbrace{\mathbf{n}_{k}}_{\text {Noise }},
\end{equation}
which consists of the desired signal, intra interference, and noise components, where $\mathbf{F}_{k}=\left[\mathbf{F}_{1, k}^{\mathrm{\mathit{H} }}, \ldots, \mathbf{F}_{B, k}^{\mathrm{\mathit{H} }}\right]^{\mathrm{\mathit{H} }} \in \mathbb{C}^{B N_{t} \times N_{s}}$. We also define the transmission covariance matrix for the $k$-th UE as $\mathbf{W}_{k}=\mathbf{F}_{k} \mathbf{F}_{k}^{\mathrm{\mathit{H} }} \in \mathbb{C}^{B N_{t} \times B N_{t}}$, where it is evident that $\mathbf{W}_{k} \succeq 0$.

For any AP $b$ with maximum transmit power $\mathrm{P}_{b, \max }$, the power constraint must be satisfied, i.e., $\sum_{k=1}^{K} \operatorname{Tr}\left(\mathbf{T}_{b} \mathbf{W}_{k}\right) \leq \mathrm{P}_{b, \max }$, where
\begin{equation}
\mathbf{T}_{b} = \operatorname{diag}(\underbrace{0, \ldots, 0}_{(b-1) N_{t}}, \underbrace{1, \ldots,1}_{N_{t}}, \underbrace{0, \ldots, 0}_{(B-b) N_{t}}),
\end{equation}
and $\mathbf{T}_{b}$ is a binary diagonal matrix to extract the beamforming matrices of the $b$-th AP from $\mathbf{W}_{k}$.
The achievable rate for the $k$-th UE can then be expressed as
\begin{equation}
R_{k}(\mathbf{F}, \mathbf{\Phi},\mathbf{c }) =\log \left|\mathbf{I}+\bar{\mathbf{H}}_{k} \mathbf{F}_{k} \mathbf{F}_{k}^{\mathrm{\mathit{H} }} \bar{\mathbf{H}}_{k}^{\mathrm{\mathit{H} }} (\mathbf{J}_{k} + \sigma^{2} \mathbf{I})^{-1}\right|,
\end{equation}
where $\mathbf{F}=\left[\mathbf{F}_{ k}, \forall  k\right] ,\mathbf{\Phi }=\left[\mathbf{\Phi}_{ m}, \forall m\right],\mathbf{c }=\left[c_{m,k}, \forall m,k\right]$, and $\mathbf{J}_{k}=\sum_{i \neq k}^{K} \bar{\mathbf{H}}_{k} \mathbf{F}_{i} \mathbf{F}_{i}^{\mathrm{\mathit{H} }} \bar{\mathbf{H}}_{k}^{\mathrm{\mathit{H} }}$ represents the inter-user interference covariance matrix.

\subsection{Problem Formulation}
This subsection formulates the problem of maximizing the weighted sum rate through joint optimization of AP beamforming, RIS phase shifts, and RIS-UE association variables, expressed as follows:
\begin{align}
\label{X1}\max _{\mathbf{F}, \boldsymbol{\Phi},\mathbf{c}} & \ \ \sum_{k=1}^{K} \omega_{k} \log \left|\mathbf{I}+\bar{\mathbf{H}}_{k} \mathbf{F}_{k} \mathbf{F}_{k}^{\mathrm{\mathit{H} }} \bar{\mathbf{H}}_{k}^{\mathrm{\mathit{H} }} (\mathbf{J}_{k} + \sigma^{2} \mathbf{I})^{-1}\right| \\
\text { s.t. } & \ \ \sum_{k=1}^{K} \operatorname{Tr}\left(\mathbf{T}_{b} \mathbf{W}_{k}\right) \leq \mathrm{P}_{b, \max }, \ \forall b \in \mathcal{B},\tag{\ref{X1}{a}} \label{X1a}\\
& \ \  0 \leq \theta_{m, n} \leq 2 \pi, \ \forall m \in \mathcal{M}, \tag{\ref{X1}{b}}\label{X1b}\\
& \ \  c_{m, k} \in\{0,1\}, \ \forall m \in \mathcal{M}, \ k \in \mathcal{K}, \tag{\ref{X1}{c}}\label{X1c}\\
& \ \ {U_{\mathrm{match}} \geq \sum_{k \in \mathcal{K}} c_{m, k} , \ \forall m \in \mathcal{M} }, \tag{\ref{X1}{d}} \label{X1d}\\
& \ \  { R_{\mathrm{match}} \geq \sum_{m \in \mathcal{M}} c_{m, k} , \ \forall k \in \mathcal{K},}   \tag{\ref{X1}{e}} \label{X1e}
\end{align}
where (\ref{X1a})  denotes AP power constraints, and (\ref{X1b}) denotes modulus constraints for the RIS. The variable $\omega_{k}$ represents the weighted component corresponding to the $k$-th UE.  {Constraint (\ref{X1d}) indicates that one RIS can only serve $U_{\mathrm{match}}$ UEs at most. Constraint (\ref{X1e}) indicates that one UE can be served by at most $R_{\mathrm{match}}$ RISs.} Clearly, the optimization problem is inherently non-convex and highly complex due to the interdependency among multiple optimization variables. To address this challenge, we propose a structured low-complexity algorithmic framework designed to efficiently solve this problem.

\section{The proposed joint RIS-UE association and beamforming framework} \label{secthree}
This section presents a framework for joint RIS-UE association and beamforming to address the WSR maximization problem in (\ref{X1}). Section \ref{seca} provides an overview of the proposed two-stage framework design. In the first stage, outlined in Section \ref{sec:opt_phi_and_c}, we establish the RIS-UE association using a many-to-many matching algorithm. Section \ref{secc} details the RIS phase shift optimization, which is conducted in both the first and second stages. Finally, in the second stage, AP beamforming is optimized via the joint BD scheme, as presented in Section \ref{sec:opt_F}.
\subsection{Overview of the Proposed Two-Stage Framework} \label{seca}

To support the proposed framework design, we assume that the CSI of the RIS-assisted CF MIMO system is obtainable by the CPU through the fronthaul links \cite{CSI2}.

An intuitive illustration of the proposed two-stage framework is provided in Fig. \ref{association}. In particular, we leverage the characteristic of limited user mobility over a large timescale. In the first stage, all channels are fully acquired at the start of a large timescale to establish the RIS-UE association. Subsequently, in the second stage, only channels between each RIS and its associated UEs are acquired over the following small timescales, with RIS phase shifts and AP beamforming optimized based on partial CSI. This process is repeated for each large timescale, effectively reducing the need for frequent channel acquisitions between the RIS and non-associated UEs. As a result, this approach requires acquiring only up to $R_{\mathrm{match}}\times K$ or $M\times U_{\mathrm{match}}$ RIS-UE channels rather than the full $M\times K$ RIS-UE channels within each small timescale.

\begin{figure}[t]
    \centering   \includegraphics[width=0.7\columnwidth,height=0.3\linewidth]{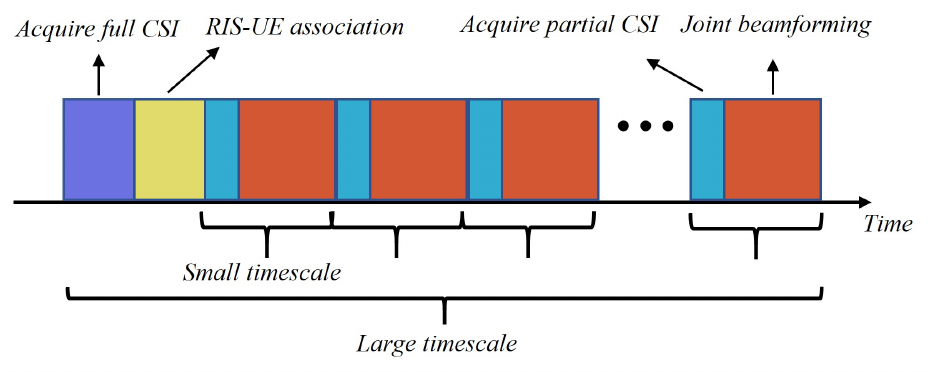}
    \caption{{Transmission protocol for the proposed RIS-assisted CF network}}
    \label{association}
\end{figure}

\subsection{ Many-to-Many Matching} 
\label{sec:opt_phi_and_c}
Since RIS-UE association is established prior to data transmission, joint beamforming optimization is not considered at this stage. Furthermore, we posit that RIS-UE association depends not only on the  channel conditions but also on the potential impact of optimizing RIS phase shifts. Building on the heuristic insight from Section \ref{sec:opt_F}, where the inter-user interference covariance matrix $\mathbf{J}_{k}$ is set to $\mathbf{0}$ under the joint BD precoding scheme, the original objective in problem (\ref{X1}) can be reformulated as 
\begin{align}
    \log \left|\mathbf{I}+\frac{1}{\sigma^{2}} \bar{\mathbf{H}}_{k} \mathbf{F}_{k} \mathbf{F}_{k}^{\mathrm{\mathit{H} }} \bar{\mathbf{H}}_{k}^{\mathrm{\mathit{H} }}\right|   =\sum_{i=1}^{N_{s}} \log \left(1+\frac{1}{\sigma^{2}} \lambda_{i}[k]^{2}\right),  \label{SVD1}
\end{align}
where the equation is obtained via the SVD of the matrix $\bar{\mathbf{H}}_{k} \mathbf{F}_{k}$ with ${\left\{\lambda_{i}[k],i=1,\dots,N_{s} \right\}}$ denoting its singular values. An upper bound can be derived as follows:
\begin{equation}
    \sum_{i=1}^{N_{s}} \log \left(1+\frac{1}{\sigma^{2}} \lambda_{i}[k]^{2}\right) \leq N_{s} \log  \left(1+\frac{1}{N_{s}\sigma^{2}} \sum_{i=1}^{N_{s}} \lambda_{i}[k]^{2}\right).
\end{equation}
{The inequality is derived using the Jensen's inequality\cite{zhanshen}, which is applicable to concave functions $f(x)$ satisfying $\sum_{i}f \left ( x_i \right ) \le I f\left( \frac{1}{I} \sum_{i}x_i \right )$.}

Note that $\sum_{i=1}^{N_{s}} \lambda_{i}[k]^{2}=\left\|\bar{\mathbf{H}}_{k} \mathbf{F}_{k}\right\|_{F}^{2}$. Using matrix norm inequalities, we obtain $\left\|\bar{\mathbf{H}}_{k} \mathbf{F}_{k}\right\|_{F}^{2} \leq\left\|\bar{\mathbf{H}}_{k}\right\|_{F}^{2}\left\|\mathbf{F}_{k}\right\|_{F}^{2}=l_{k}\left\|\bar{\mathbf{H}}_{k}\right\|_{F}^{2}$, where $l_{k}$ denotes the total transmit power from the APs to UE $k$. 
Considering that the bound of achievable rate obtained via Jensen's inequality is typically demonstrated tight in MIMO networks\cite{bound1,bound2} and recall that $\bar{\mathbf{H}}_{k}=\mathbf{H}_{k}^{d}+\sum_{m=1}^{M} c_{m,k}\mathbf{H}_{m, k}^{r} \boldsymbol{\Phi}_{m} \mathbf{G}_{m}^{r}$, the optimization problem corresponding to RIS-UE association and RIS phase shifts can thus be formulated as follows:
\begin{align}
    \max _{\boldsymbol{\Phi}, \mathbf{c}} & \ \  \sum_{k=1}^{K}\ 
 \omega_{k} \left\|\mathbf{H}_{k}^{d}+\sum_{m=1}^{M} c_{m, k} \mathbf{H}_{m, k}^{r} \boldsymbol{\Phi}_{m} \mathbf{G}_{m}^{r}\right\|_{F}^{2}\label{opt6} \\
\text { s.t. } & \ \ 0 \leq \theta_{m, n} \leq 2 \pi, \ \forall m \in \mathcal{M}, \tag{\ref{opt6}{a}}\\
&  \ \ c_{m, k} \in\{0,1\}, \ \forall m \in \mathcal{M}, \ k \in \mathcal{K}, \tag{\ref{opt6}{b}}\\
& \ \ {U_{\mathrm{match}} \geq \sum_{k \in \mathcal{K}} c_{m, k} , \ \forall m \in \mathcal{M},} \tag{\ref{opt6}{c}} \\
& \ \ { R_{\mathrm{match}} \geq \sum_{m \in \mathcal{M}} c_{m, k} , \ \forall k \in \mathcal{K},}\tag{\ref{opt6}{d}}
\end{align}
{The problem in (\ref{opt6}) can be interpreted as a \textit{many-to-many matching game}\cite{matchgame,matchnew}. In the game, the RIS set $\mathcal{M}$ and the UE set $\mathcal{K}$ are the two teams of players with $\mathcal{M} \cap \mathcal{K}  = \emptyset$.
Each player in any team tends to choose several favorite players from the opposing team, not exceeding a quota, i.e., $R_{\mathrm{match}}$ or $U_{\mathrm{match}}$. And the matching process requires the use of swap matching\cite{ swapmatch}.
Subsequently, we introduce the formal definitions of many-to-many matching.}

 \begin{definition}
 A many-to-many matching $\gamma $ is a function from the set $\mathcal{M}\cup\mathcal{K} $ to a set of all subsets of $\mathcal{M}\cup\mathcal{K} $, which satisfies the following conditions:
   \begin{itemize}
       \item[1)] {$U_{\mathrm{match}}\ge|\gamma ( m)|$ and $\gamma ( m)\subseteq \mathcal{K}$, $\forall m\in \mathcal{M} $;}
       \item[2)] {$R_{\mathrm{match}}\ge|\gamma ( k)|$ and $\gamma ( k)\subseteq \mathcal{M}$, $\forall k\in \mathcal{K} $;}
       \item[3)] {$k\in\gamma ( m)$ if and only if $m \in\gamma ( k), \forall m\in \mathcal{M}, \forall k\in \mathcal{K}$.}
   \end{itemize}
 \end{definition}
 
{From conditions 1) and 2), it is known that each RIS matches at most $U_{match}$ UEs and each UE matches at most $R_{match}$ RISs. Depending on the input parameters, the function $\gamma$ maps to different spaces. If the input variable is $m$, which represents the index of RIS, then $\gamma ( m)$ maps to the set $\mathcal{K}$. If the input variable is $k$,  which represents the index of UE, then $\gamma ( k)$ maps to the set $\mathcal{M}$. From condition 3), we have $c_{m,k}=1$ if $k\in\gamma ( m)$ and $m\in\gamma ( k)$, which represents $(m,k)$ as a mathced pair, otherwise $c_{m,k}=0$.}

\begin{definition}
  {Each player has a proprietary preference value for each player in the opposing team, with the preference relationship denoted by the symbol $\succ$. $m\succ_{k} m' $ indicates that UE $k$ prefers RIS $m$ to serve itself rather than RIS $m'$ and  $k\succ_{m} k' $ implies that RIS $m$ prefers to serve UE $k$ rather than UE $k'$, where $m\ne m'$ and $k\ne k'$. Preferences are stored in a preference list, and the elements in the preference list of $k\in \mathcal{K} $ or $ m\in \mathcal{M} $ are sorted in descending order containing all the players of the opposing team.}
\end{definition}

{We define the UE's and RIS's preference lists as $p_k$ and $p_m$. In particular, The utility function (preference) of UE $k$ with regard to the RIS $m$ is provided  by}
\begin{equation}
 \label{Ukm}  { U_{km}=\left\|  \mathbf{H}_{m, k}^{r}  \boldsymbol{\Phi}_{m}^{k\star } \mathbf{G}_{m}^{r}\right\|_{F}^{2} ,}
\end{equation}
where
\begin{equation}
\label{phistar}  {  \boldsymbol{\Phi}_{m}^{k\star } = \mathop{\arg\max}\limits_{\boldsymbol{\Phi}_{m}}  \left\|\mathbf{H}_{k}^{d}+  \mathbf{H}_{m, k}^{r} \boldsymbol{\Phi}_{m} \mathbf{G}_{m}^{r}\right\|_{F}^{2},} 
\end{equation}
{and $\boldsymbol{\Phi}_{m}^{k\star }$ is obtained by solving \textit{Algorithm 2 (lines 2-9)}, the details of which will be presented in the next subsection. The utility function $U_{km}$ is part of the objective function in the problem (\ref{opt6}). It shows the RIS-UE channel strength when the equivalent channel strength is maximum. By the above definition, in the preference list $p_k$ of UE $k$, we have $m\succ_{k} m'$ when $U_{km} > U_{km'}$. The utility function (preference) of RIS $m$ about the UE $k$ is of the same form as in (\ref{Ukm}), then we have $k\succ_{m} k'$ when $U_{km} > U_{k'm}$ in the preference list $p_m$.}

If no player benefits from a new swap-matching, the matching $\gamma$ is called \textit{'stable'}. Then we introduce the concept of the blocking pair.
\begin{definition}
    {For a given matching function $\gamma$, and given matched pairs $(m,k')$ and $(m',k)$, $(m,k)$ is a blocking pair if $m\notin\gamma(k)$, $m\succ_{k} m'$, and $k\notin\gamma(m)$, $k\succ_{m} k'$.}
\end{definition}

{The blocking pair $(m,k)$ can only be formed if UE $k$ and RIS $m$ have not yet matched each other, which indicates the existing matches are suboptimal. In addition, we define $U_k^{\mathrm{req} } = \delta  \left\|\mathbf{H}_{k}^{d}\right\|_{F}^{2}$, where $\delta  \in(0,1)$. When $U_{km} < U_k^{\mathrm{req} }$, RIS $m$ is added to the rejection list of UE $k$, $\mathcal{R}_k$. This is because even if the RIS and the UE are successfully matched, the performance gain is insignificant when the AP-UE channel strength is considerable. Consequently, it would be more advantageous to reduce the channel acquisition overhead.}

According to the above definitions, we explain how to establish RIS-UE association in \textit{Algorithm 1}.

\begin{algorithm}
	\renewcommand{\algorithmicrequire}{\textbf{Input:}}
	\renewcommand{\algorithmicensure}{\textbf{Output:}}
	\caption{{Many-to-Many Matching Algorithm}}
	\label{alg3}
	\begin{algorithmic}[1]
        \STATE {\textbf{Input:}: The utility function $U_{km}$, the preference lists $p_k$ and $p_m$, $\forall k \in \mathcal{K}, \forall m \in \mathcal{M}$.}
        \STATE {\textbf{Initialization}: $\gamma(m) =\left \{ \emptyset  \right \} $, $\gamma(k) =\left \{ \emptyset  \right \}$, $\mathcal{R} _{k} =\left \{ \emptyset  \right \}$, $\forall k \in \mathcal{K}, \forall m \in \mathcal{M}$ .}
        \STATE {\textbf{while} $\exists  p_k \ne \emptyset ,\forall k$, \textbf{do}}
        \STATE {\quad Each UE $k$ selects the first RIS $m$ in the preference list $p_k$, where $m \notin \mathcal{R}_k$.}
        \STATE {\quad $p_k\setminus m  \to  p_k$.}
        \STATE {\quad \textbf{if} $U_{km} \ge U_k^{\mathrm{req} }$ \textbf{then}}
        \STATE {\quad  \quad $\mathcal{R}_k \cup m \to \mathcal{R}_k$ and \textbf{skip current loop}        }
        \STATE{ \quad \textbf{if} $\left | \gamma(m)  \right | < U_{match} $  \textbf{then} }
        \STATE {\quad  \quad \textbf{if} $\left | \gamma(k)  \right | < R_{match}$ \textbf{then}}
        \STATE {\quad  \quad \quad  \quad $\gamma(m) \cup k\to \gamma(m)$, $\gamma(k) \cup m\to \gamma(k)$}
        \STATE {\quad  \quad \textbf{else}}
        \STATE {\quad  \quad \quad Find $m' =\mathop{\arg\min}\limits_{m''} U_{km''}, \forall m''\in  \gamma(k)$}
        \STATE {\quad  \quad \quad \textbf{if} $m\succ_{k} m'$ \textbf{then}}
        \STATE {\quad  \quad \quad \quad   $\gamma(m) \cup k\to  \gamma(m)$,   $\gamma(m')\setminus k\to  \gamma(m')$}
         \STATE {\quad  \quad \quad \quad  $\left ( \gamma(k) \setminus \left \{ m' \right \}    \right ) \cup \left \{ m \right \}  \to  \gamma(k)$}
        \STATE {\quad  \textbf{else} }
        \STATE {\quad  \quad Find $k' =\mathop{\arg\min}\limits_{k''} U_{k''m}, \forall k''\in \gamma(m)$}
        \STATE {\quad  \quad \textbf{if} $\left | \gamma(k)  \right | < R_{match}$ and 
        $k\succ_{m} k'$ \textbf{then} }
        \STATE {\quad  \quad \quad   $\gamma(k) \cup m\to \gamma(k)$, $\gamma(k') \setminus m\to \gamma(k')$}
     \STATE {\quad  \quad \quad   $\left ( \gamma(m) \setminus \left \{ k' \right \}    \right ) \cup \left \{ k \right \}  \to \gamma(m)$}
               {\STATE \quad  \quad \textbf{else} }
        {\STATE \quad  \quad \quad Find $m' =\mathop{\arg\min}\limits_{m''} U_{km''}, \forall m''\in  \gamma(k)$}
        {\STATE \quad  \quad \quad \textbf{if} $m\succ_{k} m'$ and $k\succ_{m} k'$ \textbf{then} }
       { \STATE \quad  \quad \quad \quad $(m,k)$ is a blocking pair and update the current match.}
       { \STATE \textbf{end while}}
		\ENSURE {  $\gamma(m)$, $\gamma(k)$, $c_{m,k}, \forall m \in \mathcal{M}, \forall k \in \mathcal{K}$}
	\end{algorithmic}  
\end{algorithm}

\subsection{RIS Phase Shift Design} \label{secc}

{In this subsection, we solve all $\boldsymbol{\Phi}_{m}^{k\star }, \forall k,m$, in (\ref{phistar}), as given in (\ref{phistar}), which are used in the first stage to initialize the utility functions, and the optimal $\boldsymbol{\Phi}_{m}, \forall m$, in (\ref{opt6}), which are utilized for data transmission in the second stage.} For the sake of narrative flow, we start by solving $\boldsymbol{\Phi}_{m}^{k\star }$  and then extend the method to solve $\boldsymbol{\Phi}_{m}$ at the end of the subsection.

{ The (\ref{phistar}) is equivalent to the following problem:}
{ \begin{align}
 \label{opt7}\max _{\boldsymbol{\Phi}_{m}^{k\star}} &
 \ \ \left\|\mathbf{H}_{k}^{d}+\mathbf{H}_{m, k}^{r} \boldsymbol{\Phi}_{m}^{k\star} \mathbf{G}_{m}^{r}\right\|_{{F }}^{2} \\
\text { s.t. } & \ \ \ 0 \leq \theta_{m, n} \leq 2 \pi,   \tag{\ref{opt7}{a}}
 \end{align}}
{for each UE $k$ and each RIS $m$.} Upon $\operatorname{vec}(\mathbf{X Y Z})=\left(\mathbf{Z}^{\mathrm{\mathit{T} }} \otimes \mathbf{X}\right) \operatorname{vec}(\mathbf{Y})$, the objective function in (\ref{opt7}) can be rewritten as
\begin{gather}  
\left\|\mathbf{H}_{k}^{d}+\mathbf{H}_{m, k}^{r} \boldsymbol{\Phi}_{m}^{k\star} \mathbf{G}_{m}^{r}\right\|_{F}^{2} \quad  \quad \quad \quad \quad \quad \quad \quad \quad \quad \quad \quad \quad  \quad \ \ \notag\\
=\operatorname{vec}^{\mathrm{\mathit{H} }}\left(\mathbf{H}_{k}^{d}+\mathbf{H}_{m, k}^{r} \boldsymbol{\Phi}_{m}^{k\star} \mathbf{G}_{m}^{r}\right) \operatorname{vec}\left(\mathbf{H}_{k}^{d}+\mathbf{H}_{m, k}^{r} \boldsymbol{\Phi}_{m}^{k\star} \mathbf{G}_{m}^{r}\right) \ \ \  \notag\\
=\operatorname{vec}^{\mathrm{\mathit{H} }}\left(\boldsymbol{\Phi}_{m}^{k\star}\right)\left(\mathbf{G}_{m}^{r \mathrm{\mathit{T} }} \otimes \mathbf{H}_{m, k}^{r}\right)^{\mathrm{\mathit{H} }}\left(\mathbf{G}_{m}^{r \mathrm{\mathit{T} }} \otimes \mathbf{H}_{m, k}^{r}\right) \operatorname{vec}\left(\boldsymbol{\Phi}_{m}^{k\star}\right) \notag\\
+\operatorname{vec}^{\mathrm{\mathit{H} }}\left(\boldsymbol{\Phi}_{m}^{k\star}\right)\left(\mathbf{G}_{m}^{r \mathrm{\mathit{T} }} \otimes \mathbf{H}_{m, k}^{r}\right)^{\mathrm{\mathit{H} }} \operatorname{vec}\left(\mathbf{H}_{k}^{d}\right) \qquad   \quad  \quad \ \ \  \notag \\
+\operatorname{vec}^{\mathrm{\mathit{H} }}\left(\mathbf{H}_{k}^{d}\right)\left(\mathbf{G}_{m}^{r  \mathrm{\mathit{T} }} \otimes \mathbf{H}_{m, k}^{r}\right) \operatorname{vec}\left(\boldsymbol{\Phi}_{m}^{k\star}\right) \qquad   \qquad  \quad \ \;  \notag \\
+\operatorname{vec}^{\mathrm{\mathit{H} }}\left(\mathbf{H}_{k}^{d}\right) \operatorname{vec}\left(\mathbf{H}_{k}^{d}\right) .   \qquad   \qquad  \qquad \quad  \; \,   \qquad   \quad  \, \,
\end{gather}
In order to  simplify the equation, we define $\mathbf{h}_{k}^{d}=\operatorname{vec}\left(\mathbf{H}_{k}^{d}\right)$, $\mathbf{C}_{mk}=\left(\mathbf{G}_{m}^{r \mathrm{\mathit{T} }} \otimes \mathbf{H}_{m, k}^{r}\right)_{\left[:,(i-1)N+i\right]}, i=1, \ldots,  N$. And $\boldsymbol{\varphi}_{m}^{k}$ is the column vector consisting of the non-zero elements of $\operatorname{vec}\left(\boldsymbol{\Phi}_{m}^{k\star}\right)$ arranged in sequence. {By removing irrelevant constant terms that are not related to the RIS phase shift matrixes, the problem (\ref{opt7}) is reformulated as }
{\begin{align}
&\label{opt8} \max _{\boldsymbol{\varphi}_{m}^{k}} \ \ \boldsymbol{\varphi}_{m}^{k\mathrm{\mathit{H} }} \mathbf{C}_{mk}^{\mathrm{\mathit{H} }} \mathbf{C}_{mk} \boldsymbol{\varphi}_{m}^{k\mathrm{\mathit{H} }}+2*\operatorname{Re}(\boldsymbol{\varphi}_{m}^{k\mathrm{\mathit{H} }} \mathbf{C}_{mk}^{\mathrm{\mathit{H} }} \mathbf{h}_{k}^{d})  \\
&\text { s.t. } \quad 0 \leq \theta_{m, n} \leq 2 \pi,  \tag{\ref{opt8}{a}}
\end{align}}
{for each UE $k$ and each RIS $m$.}
We note that (\ref{opt8}) clearly a non convex quadratic constrained quadratic
programming (QCQP) problem due to the unit modulus constraints. 

Specifically, the MM framework\cite{MM} is exploited to solve the subproblem in (\ref{opt8}). The key idea of the framework is to approximate the original issue well by iteratively maximizing the sequence of surrogate functions under the same constraints. 

Consider a maximization problem such as
\begin{align}
    &\max _{\mathbf{x}} \ \ g(\mathbf{x}) \\
&\text { s.t. }  \ \ \mathbf{x} \in \mathcal{X}. \notag
\end{align}
Its feasible points are $\left\{\mathbf{x}_{(t)}\right\} \in \mathcal{X}$. By maximizing a series of surrogate functions, i.e., $g\left(\mathbf{x} \mid \mathbf{x}_{(t)}\right), t=0,1, \ldots$, the MM approach maximizes $g\left(\mathbf{x} \right)$ while satisfying the following three requirements:
\begin{itemize}
\item[1)] $g\left(\mathbf{x}_{(t)} \mid \mathbf{x}_{(t)}\right)=g\left(\mathbf{x}_{(t)}\right)$,
\item[2)] $g\left(\mathbf{x} \mid \mathbf{x}_{(t)}\right) \leq g(\mathbf{x})$, 
\item[3)] $\mathbf{x}_{(t+1)} \in \operatorname{argmax}_{\mathbf{x} \in \mathcal{X}} \mathrm{g}\left(\mathbf{x} \mid \mathbf{x}_{(t)}\right)$.
\end{itemize}
We can readily prove that
\begin{equation}
    g\left(\mathbf{x}_{(t+1)}\right) \geq g\left(\mathbf{x}_{(t+1)} \mid \mathbf{x}_{(t)}\right) \geq g\left(\mathbf{x}_{(t)} \mid \mathbf{x}_{(t)}\right)=g\left(\mathbf{x}_{(t)}\right).
\end{equation}

Consequently, the sequence of solutions obtained at each iteration will lead to the objective function value that increases monotonically. Moreover, we know that there is an upper bound on the objective value. Hence, it can be concluded that the MM framework guarantees the convergence of the objective function to a stationary point. Subsequently, we introduce a lemma and employ it to solve the problem in (\ref{opt8}).
\begin{lemma}
For any solution $\mathbf{x}_{(t)}$ and for any feasible  $\mathbf{x}$, it holds that
\begin{align}
   \label{lemma3}    \mathbf{x}^{\mathrm{\mathit{H} }} \mathbf{L} \mathbf{x} \geq(\mathbf{x}^{\mathrm{\mathit{H} }}-\mathbf{x}_{(t)}^{\mathrm{\mathit{H} }}) \mathbf{L} \mathbf{x} 
       +\mathbf{x}^{\mathrm{\mathit{H} }} \mathbf{L}(\mathbf{x}-\mathbf{x}_{(t)})+\mathbf{x}_{(t)}^{\mathrm{\mathit{H} }} \mathbf{L} \mathbf{x}_{(t)} ,
\end{align}
where $\mathbf{L} $ is a semi-positive definite matrix. 
\end{lemma}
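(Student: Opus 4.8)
\emph{Proof proposal.} The plan is to treat the bound as an immediate consequence of the positive semidefiniteness of $\mathbf{L}$, applied to the displacement vector $\mathbf{x}-\mathbf{x}_{(t)}$. Since $\mathbf{L}$ is Hermitian with $\mathbf{L}\succeq 0$, the scalar $(\mathbf{x}-\mathbf{x}_{(t)})^{H}\mathbf{L}(\mathbf{x}-\mathbf{x}_{(t)})$ is real and nonnegative for every choice of $\mathbf{x}$ and $\mathbf{x}_{(t)}$, and the whole lemma will drop out once this single quantity is expanded and the terms are regrouped.

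First I would expand this displacement form by bilinearity, using $\mathbf{L}=\mathbf{L}^{H}$, to obtain
\[
(\mathbf{x}-\mathbf{x}_{(t)})^{H}\mathbf{L}(\mathbf{x}-\mathbf{x}_{(t)}) = \mathbf{x}^{H}\mathbf{L}\mathbf{x} - \mathbf{x}_{(t)}^{H}\mathbf{L}\mathbf{x} - \mathbf{x}^{H}\mathbf{L}\mathbf{x}_{(t)} + \mathbf{x}_{(t)}^{H}\mathbf{L}\mathbf{x}_{(t)} \ge 0 .
\]
Then I would move the terms other than $\mathbf{x}^{H}\mathbf{L}\mathbf{x}$ across the inequality, which isolates the desired lower bound on $\mathbf{x}^{H}\mathbf{L}\mathbf{x}$ in terms of the current iterate $\mathbf{x}_{(t)}$. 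Noting that $\mathbf{x}_{(t)}^{H}\mathbf{L}\mathbf{x}$ and $\mathbf{x}^{H}\mathbf{L}\mathbf{x}_{(t)}$ are complex conjugates, their sum equals $2\,\mathrm{Re}\{\mathbf{x}^{H}\mathbf{L}\mathbf{x}_{(t)}\}$, which gives the surrogate its convenient linear-in-$\mathbf{x}$ form. This exhibits the bound as the minorizer required in the MM framework, and one immediately verifies the tightness condition~1), since the displacement form vanishes when $\mathbf{x}=\mathbf{x}_{(t)}$.

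There is no deep obstacle here; the only point demanding care is the conjugate bookkeeping for complex vectors, namely checking that the two linear cross terms combine correctly and, above all, that the direction of the inequality is the one implied by $\mathbf{L}\succeq 0$ (with equality holding precisely on $\mathbf{x}=\mathbf{x}_{(t)}$). Finally, I would record the intended use: invoking the lemma with $\mathbf{L}=\mathbf{C}_{mk}^{H}\mathbf{C}_{mk}\succeq 0$ and $\mathbf{x}=\boldsymbol{\varphi}_{m}^{k}$ linearizes the convex quadratic $\boldsymbol{\varphi}_{m}^{k\,H}\mathbf{C}_{mk}^{H}\mathbf{C}_{mk}\boldsymbol{\varphi}_{m}^{k}$ in (\ref{opt8}), turning the surrogate into a linear objective under the unit-modulus constraint whose maximizer is obtained in closed form by matching each phase to the argument of the corresponding entry of the linear coefficient vector.
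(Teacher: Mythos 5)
Your proof is correct and is essentially the paper's own argument: the paper substitutes $\mathbf{x}=(\mathbf{x}-\mathbf{x}_{(t)})+\mathbf{x}_{(t)}$ into $\mathbf{x}^{H}\mathbf{L}\mathbf{x}$, expands by bilinearity, and drops the nonnegative term $(\mathbf{x}-\mathbf{x}_{(t)})^{H}\mathbf{L}(\mathbf{x}-\mathbf{x}_{(t)})$, which is exactly the same algebra as your expand-and-rearrange of $(\mathbf{x}-\mathbf{x}_{(t)})^{H}\mathbf{L}(\mathbf{x}-\mathbf{x}_{(t)})\ge 0$. One point is worth recording explicitly: the bound you arrive at, $\mathbf{x}^{H}\mathbf{L}\mathbf{x}\ \ge\ \mathbf{x}_{(t)}^{H}\mathbf{L}\mathbf{x}+\mathbf{x}^{H}\mathbf{L}\mathbf{x}_{(t)}-\mathbf{x}_{(t)}^{H}\mathbf{L}\mathbf{x}_{(t)}$, is what the paper's appendix actually derives and what the MM update (\ref{phik}) relies on, whereas the inequality as typeset in (\ref{lemma3}) places $\mathbf{x}$ in the second slot of both cross terms where $\mathbf{x}_{(t)}$ should stand; read literally, (\ref{lemma3}) is equivalent to $(\mathbf{x}-\mathbf{x}_{(t)})^{H}\mathbf{L}(\mathbf{x}-\mathbf{x}_{(t)})\le 0$, which is false for a positive semidefinite $\mathbf{L}$ except in the degenerate case $\mathbf{L}(\mathbf{x}-\mathbf{x}_{(t)})=\mathbf{0}$. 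So your derivation proves the intended (corrected) statement --- the same one the paper's proof establishes --- and your insistence on verifying the direction of the inequality is precisely the point at which the typeset statement, as opposed to the proof, goes wrong.
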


\begin{proof}
    Please see the appendix.
\end{proof}

It is obvious that $\mathbf{C}_{mk}^{\mathrm{\mathit{H} }} \mathbf{C}_{mk}\succeq 0$ holds, thus we have 
\begin{align}
    \boldsymbol{\varphi}_{m}^{k\mathrm{\mathit{H} }} \mathbf{C}_{mk}^{\mathrm{\mathit{H} }} \mathbf{C}_{mk} \boldsymbol{\varphi}_{m}^{k} \geq\left(\boldsymbol{\varphi}_{m}^{k\mathrm{\mathit{H} }}-\boldsymbol{\varphi}_{m,(t)}^{k\mathrm{\mathit{H} }}\right) \mathbf{C}_m{k}^{\mathrm{\mathit{H} }} \mathbf{C}_{mk} \boldsymbol{\varphi}_{m}^{k}+  \notag \\
\boldsymbol{\varphi}_{m}^{k\mathrm{\mathit{H} }} \mathbf{C}_{mk}^{\mathrm{\mathit{H} }} \mathbf{C}_{mk}\left(\boldsymbol{\varphi}_{m}^{k}-\boldsymbol{\varphi}_{m,(t)}^{k}\right)+\boldsymbol{\varphi}_{m,(t)}^{k\mathrm{\mathit{H} }} \mathbf{C}_{mk}^{\mathrm{\mathit{H} }} \mathbf{C}_{mk} \boldsymbol{\varphi}_{m,(t)}^{k}.  \ \ \  
\end{align}
{Substituting the right-hand side of the inequality into problem (\ref{opt8}) and ignore the  irrelevant constants, problem (\ref{opt8}) can be re-expressed as follows:}
{\begin{align}
 &   \max _{\boldsymbol{\varphi}_{m}^{k}} \ \ \operatorname{Re}\left\{\boldsymbol{\varphi}_{m}^{k\mathrm{\mathit{H} }}\left(\mathbf{C}_{mk}^{\mathrm{\mathit{H} }} \mathbf{C}_{mk} \boldsymbol{\varphi}_{m,(t)}^{k}+\mathbf{C}_{mk}^{\mathrm{\mathit{H} }} \mathbf{h}_{k}^{d}\right)\right\} \label{opt9}\\
&\text { s.t. } \ \ \quad0 \leq \theta_{m, n} \leq 2 \pi,  \tag{\ref{opt9}{a}}
\end{align}}
{for each UE $k$ and each RIS $m$.} The solution in $t+1$ iteration is as follows:
\begin{equation}
 {   \boldsymbol{\varphi}_{m,(t+1)}^{k}=e^{j \arg \left(\mathbf{C}_{mk}^{\mathrm{\mathit{H} }} \mathbf{C}_{mk} \boldsymbol{\varphi}_{m,(t)}^{k}+\mathbf{C}_{mk}^{\mathrm{\mathit{H} }} \mathbf{h}_{k}^{d}\right)}} .\label{phik}
\end{equation}
In this way, we have $\boldsymbol{\Phi}_{m}^{k\star } = \mathrm{diag} (\boldsymbol{\varphi}_{m}^{k\star} )$, where $\boldsymbol{\varphi}_{m}^{k\star}$ is the $\boldsymbol{\varphi}_{m}^{k}$ obtained in the last iteration. {Then we apply $\boldsymbol{\Phi}_{m}^{k\star }$ to the initialization of the utility function in (\ref{Ukm}) to establish the RIS-UE association in the first stage. The specific optimization details can be seen in \textit{Algorithm 2 (lines 2-9)}.}

{With the RIS-UE association fully established, the matching-related variables and constraints in problem (\ref{opt6}) can be omitted, thereby allowing  (\ref{opt6}) to be reformulated in the same structure as (\ref{opt9}), representing the RIS phase shifts optimization for data transmission in the second stage, as follows:}
\begin{align}
 &   \max _{\boldsymbol{\varphi}_{m}} \ \ \operatorname{Re}\left\{\boldsymbol{\varphi}_{m}^{\mathrm{\mathit{H} }}\left(\mathbf{C}_{m} \boldsymbol{\varphi}_{m,(t)}+ \mathbf{h}_{m}\right)\right\} \label{optnew}\\
&\text { s.t. } \ \ \quad0 \leq \theta_{m, n} \leq 2 \pi,  \tag{\ref{optnew}{a}}
\end{align}
where $\mathbf{C}_{m}=\sum_{k\in\gamma(m)} \mathbf{C}_{mk}^{\mathrm{\mathit{H} }}\mathbf{C}_{mk}$ and $\mathbf{h}_{m}=\sum_{k\in\gamma(m)} \mathbf{C}_{mk}^{\mathrm{\mathit{H} }}\mathbf{h}_{k}^{d}$. And the solution in $t+1$th iteration has the same form as in (\ref{phik}), which can be expressed as
\begin{equation}
 {    \boldsymbol{\varphi}_{m,(t+1)}=e^{j \arg \left(\mathbf{C}_{m} \boldsymbol{\varphi}_{m,(t)}+ \mathbf{h}_{m}\right)}} .\label{phim}
\end{equation}
{We summarize the procedure in \textit{Algorithm 2 (lines 10-17)}.}

\begin{algorithm}
	\renewcommand{\algorithmicrequire}{\textbf{Input:}}
	\renewcommand{\algorithmicensure}{\textbf{Output:}}
	\caption{{MM Algorithm }}
	\label{alg2}
	\begin{algorithmic}[1]
       \STATE \textbf{Initialization}: { $t=1$, the accuracy $\epsilon $, all initial RIS phase shift vectors $\boldsymbol\varphi_{m,(\mathrm{1})}^{k}  $ and $\boldsymbol\varphi_{m,(\mathrm{1})}  $, and all $\mathbf{C}_{mk}$, $\mathbf{C}_{m}$, $\mathbf{h}_{k}^{d}$, $\mathbf{h}_{m}$, $t_{\mathrm{iter} }$}
        \STATE { \textbf{If} \textit{the algorithm is used to solve (\ref{phistar})} \textbf{then}}
        \STATE { \quad Define the value of the objective in problem in (\ref{opt9}) \\ as $g(\boldsymbol\varphi_{m,(t)}^{k})$ and calculate  $g(\boldsymbol\varphi_{m,(\mathrm{1})}^{k})$.}
        \STATE { \quad \textbf{while}  $|g(\boldsymbol\varphi_{m,(t)}^{k})-g(\boldsymbol\varphi_{m,(\mathrm{t-1})}^{k})| > \epsilon$ or $t<t_{\mathrm{iter} }$ \textbf{do}}
        \STATE { \quad \quad Update $\boldsymbol\varphi_{m,(\mathrm{t+1})}^{k}$ based on  (\ref{phik}) .}
        \STATE { \quad \quad  Calculate $g(\boldsymbol\varphi_{m,(\mathrm{t+1})}^{k})$.}
        \STATE { \quad \quad  Update $t=t+1$.}
        \STATE { \quad \textbf{end while}}
        \STATE  { \textbf{Output:} all $\boldsymbol{\Phi}_{m}^{k\star } = \mathrm{diag} (\boldsymbol{\varphi}_{m}^{k\star} )$}
        \STATE \textbf{If} \textit{the algorithm is used to solve problem in
 (\ref{opt6}) when the matching variables are known} \textbf{then}
        \STATE \quad Define the value of the objective in problem in (\ref{optnew}) \\ as $g(\boldsymbol\varphi_{m,(t)})$ and calculate  $g(\boldsymbol\varphi_{m,(\mathrm{1})})$.
        \STATE \quad \textbf{while}  $|g(\boldsymbol\varphi_{m,(t)})-g(\boldsymbol\varphi_{m,(\mathrm{t-1})})| > \epsilon$ or $t<t_{\mathrm{iter} }$ \textbf{do}
        \STATE \quad \quad Update $\boldsymbol\varphi_{m,(\mathrm{t+1})}$ based on (\ref{phim}) .
        \STATE \quad \quad  Calculate $g(\boldsymbol\varphi_{m,(\mathrm{t+1})}$.
        \STATE \quad \quad  Update $t=t+1$.
        \STATE \quad \textbf{end while}
        \STATE  \textbf{Output:} all $\boldsymbol{\Phi}_{m}^{\star } = \mathrm{diag} (\boldsymbol{\varphi}_{m}^{\star} )$
		
	\end{algorithmic}  
\end{algorithm}

\subsection{AP Beamforming Design} 
\label{sec:opt_F}

Given the optimal $\mathbf{c}$ and $\boldsymbol{\Phi}$, the  problem reduces to one involving only $\mathbf{F}$. Specifically, by removing extraneous variables and constraints, the problem in (\ref{X1}) can be simplified and reformulated as  follows:
\begin{align}
\max _{\mathbf{F}_{k}, \forall k \in \mathcal{K}} & \ \ \sum_{k = 1}^{K} \omega_{k} \log \left|\mathbf{I}+\bar{\mathbf{H}}_{k} \mathbf{F}_{k} \mathbf{F}_{k}^{\mathrm{\mathit{H} }} \bar{\mathbf{H}}_{k}^{\mathrm{\mathit{H} }} (\mathbf{J}_{k} + \sigma^{2} \mathbf{I})^{-1}\right| \label{X2}\\
\text { s.t. } & \ \ \sum_{k = 1}^{K} \operatorname{Tr}\left(\mathbf{T}_{b} \mathbf{W}_{k}\right) \leq \mathrm{P}_{b, \max }, \ \forall b \in \mathcal{B}. \tag{\ref{X2}{a}}
\end{align}

To effectively manage inter-user interference, we employ a block diagonalization (BD) precoding scheme to eliminate interference among users \cite{5594707}. Zero-forcing (ZF) constraints are introduced for each UE $k$, i.e., 
\begin{equation}
    \bar{\mathbf{H}}_{i} \mathbf{F}_{k}=\mathbf{0}, \ \forall k \neq i ,
\end{equation}
which are equivalent to
\begin{equation}
\bar{\mathbf{H}}_{i} \mathbf{W}_{k} \bar{\mathbf{H}}_{i}^{\mathrm{\mathit{H} }}=\mathbf{0}, \ \forall k \neq i .   
\end{equation}

With the introduction of ZF constraints, $\mathbf{J}_{k}=\mathbf{0}$, allowing the optimization problem to be further reformulated as 
\begin{align}
\label{opt1}\max _{\mathbf{W}_{k}, \forall k \in \mathcal{K}} & 
 \ \ \sum_{k=1}^{K} \omega_{k} \log \left|\mathbf{I}+\frac{1}{\sigma^{2}} \bar{\mathbf{H}}_{k} \mathbf{W}_{k} \bar{\mathbf{H}}_{k}^{\mathit{H} }\right|  \\
\text { s.t. } &  \ \ \sum_{k=1}^{K} \operatorname{Tr}\left(\mathbf{T}_{b} \mathbf{W}_{k}\right) \leq \mathrm{P}_{b, \max }, \ \forall b \in \mathcal{B},\tag{\ref{opt1}{a}} \\
&  \ \ \bar{\mathbf{H}}_{i} \mathbf{W}_{k} \bar{\mathbf{H}}_{i}^{\mathrm{\mathit{H} }}=\mathbf{0}, \ \forall k \neq i.\tag{\ref{opt1}{b}}
\end{align}
It is straightforward to verify that the problem above is a convex optimization problem, solvable with standard optimization tools such as CVX \cite{cvx}. However, this approach often incurs substantial computational overhead and lacks a closed-form optimal solution. Therefore, we employ the Lagrangian multiplier method to derive an optimal closed-form solution.

We start by removing the ZF constraints as follows: We define
\begin{equation*}
\mathbf{D}_{k}=\left[\bar{\mathbf{H}}_{1}^{\mathrm{\mathit{T} }}, \ldots, \bar{\mathbf{H}}_{k-1}^{\mathrm{\mathit{T} }}, \bar{\mathbf{H}}_{k+1}^{\mathrm{\mathit{T} }}, \ldots, \bar{\mathbf{H}}_{K}^{\mathrm{\mathit{T} }}\right]^{\mathrm{\mathit{T} }}, \ \forall k\in \mathcal{K} ,
\end{equation*}
where $\mathbf{D}_{k} \in \mathbb{C}^{L_{1} \times L_{2}}$,  with $L_{1}=(K-1)N_{r}$ and $L_{2}=BN_{t}$. $\mathbf{D}_{k}$ can be decomposed as $\mathbf{D}_{k}=\mathbf{U}_{k} \boldsymbol{\Sigma}_{k} \mathbf{E}_{k}^{\mathrm{\mathit{H} }}$ using the (full) singular value decomposition (SVD), where $\boldsymbol{\Sigma}_{k} \in \mathbb{C}^{L_{1} \times L_{2}}$ is an eigenvalue matrix.
 $\mathbf{U}_{k} \in \mathbb{C}^{L_{1} \times L_{1}}$ and $\mathbf{E}_{k} \in \mathbb{C}^{L_{2} \times L_{2}}$ are both unitary matrices.
 
 Recall that ${KN_r } \le{BN_t }$, thus $\operatorname{Rank}\left(\mathbf{D}_{k}\right)=L_{1}<L_{2}$, then we have
 \begin{equation}
     \mathbf{D}_{k}=\mathbf{U}_{k} \left[\breve{\mathbf{\Sigma}}_{k}, {\mathbf{0}}_{ L_1 \times \left ( L_2-L_1 \right )   }\right]\left[\breve{\mathbf{E}}_{k}, \tilde{\mathbf{E}}_{k}\right]^{\mathrm{\mathit{H} }},
 \end{equation}
 where $\breve{\mathbf{E}}_{k}\in \mathbb{C}^{L_{2} \times L_{1}}$ comprises the  $L_{1}$ singular vectors  associated with the non-negative singular values of $\breve{\mathbf{\Sigma}}_{k}$,  and $\tilde{\mathbf{E}}_{k}\in \mathbb{C}^{L_{2} \times (L_{2}-L_{1})}$ contains the  $(L_{2}-L_{1})$ singular vectors corresponding to the $(L_{2}-L_{1})$ zero singular values.
 They satisfy that $\mathbf{D}_{k} \tilde{\mathbf{E}}_{k}=\mathbf{0}$,  $\breve{\mathbf{E}}_{k}^{\mathrm{\mathit{H} }}\tilde{\mathbf{E}}_{k}=\mathbf{0}$, and $\tilde{\mathbf{E}}_{k}^{\mathrm{\mathit{H} }}\tilde{\mathbf{E}}_{k}=\mathbf{I}$. Furthermore, we introduce the following lemma. 
\begin{lemma}
The optimal structure of $\mathbf{F}_{k}$ and $\mathbf{W}_{k}$ are provided by
\begin{equation}
   \mathbf{F}_{k}=\tilde{\mathbf{E}}_{k} \bar{\mathbf{S}}_k , \  \forall k\in \mathcal{K},
\end{equation}
\begin{equation}
   \mathbf{W}_{k}=\tilde{\mathbf{E}}_{k} \mathbf{S}_{k} \tilde{\mathbf{E}}_{k}^{\mathrm{\mathit{H} }},  \ \forall k\in \mathcal{K},
\end{equation}
where $  \mathbf{S}_{k}= \bar{\mathbf{S}}_k  \bar{\mathbf{S}}_k  ^{H}  \in \mathbb{C}^{\left(L_{2}-L_{1}\right) \times\left(L_{2}-L_{1}\right)}$ and $\mathbf{S}_{k} \succeq 0$. Meanwhile, $\mathbf{S}_{k}$ is the new optimization variable.
\end{lemma}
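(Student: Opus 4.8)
The plan is to show that the zero-forcing (ZF) constraint in problem~(\ref{opt1}) confines every feasible precoder to the null space of $\mathbf{D}_k$, that this null space coincides with the column space of $\tilde{\mathbf{E}}_k$, and hence that the optimizer necessarily admits the stated parametrization with no loss of optimality.

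First, I would recast the per-user ZF constraints as a single linear condition. Since $\mathbf{W}_k = \mathbf{F}_k \mathbf{F}_k^{\mathrm{\mathit{H}}} \succeq 0$, for each $i \ne k$ we have $\bar{\mathbf{H}}_i \mathbf{W}_k \bar{\mathbf{H}}_i^{\mathrm{\mathit{H}}} = (\bar{\mathbf{H}}_i \mathbf{F}_k)(\bar{\mathbf{H}}_i \mathbf{F}_k)^{\mathrm{\mathit{H}}}$, which is positive semidefinite and therefore vanishes if and only if $\bar{\mathbf{H}}_i \mathbf{F}_k = \mathbf{0}$. Collecting these equalities over all $i \ne k$ and invoking the definition of $\mathbf{D}_k$ shows that the ZF constraint in (\ref{opt1}) is equivalent to the single condition $\mathbf{D}_k \mathbf{F}_k = \mathbf{0}$.

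Next, I would characterize the feasible set through the SVD of $\mathbf{D}_k$. Because $\mathrm{Rank}(\mathbf{D}_k) = L_1 < L_2$, the null space of $\mathbf{D}_k$ has dimension $L_2 - L_1$ and is spanned precisely by the columns of $\tilde{\mathbf{E}}_k$, which satisfy $\mathbf{D}_k \tilde{\mathbf{E}}_k = \mathbf{0}$ and $\tilde{\mathbf{E}}_k^{\mathrm{\mathit{H}}} \tilde{\mathbf{E}}_k = \mathbf{I}$. Consequently, $\mathbf{D}_k \mathbf{F}_k = \mathbf{0}$ holds if and only if every column of $\mathbf{F}_k$ lies in this span, i.e.\ $\mathbf{F}_k = \tilde{\mathbf{E}}_k \bar{\mathbf{S}}_k$ for some $\bar{\mathbf{S}}_k \in \mathbb{C}^{(L_2 - L_1) \times N_s}$; the converse is immediate since $\mathbf{D}_k \tilde{\mathbf{E}}_k \bar{\mathbf{S}}_k = \mathbf{0}$. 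This yields a bijective reparametrization of the ZF-feasible set, so substituting into $\mathbf{W}_k = \mathbf{F}_k \mathbf{F}_k^{\mathrm{\mathit{H}}}$ gives $\mathbf{W}_k = \tilde{\mathbf{E}}_k \mathbf{S}_k \tilde{\mathbf{E}}_k^{\mathrm{\mathit{H}}}$ with $\mathbf{S}_k = \bar{\mathbf{S}}_k \bar{\mathbf{S}}_k^{H} \succeq 0$. Since every PSD $\mathbf{S}_k$ arises from some $\bar{\mathbf{S}}_k$, letting $\mathbf{S}_k$ range over all $(L_2 - L_1) \times (L_2 - L_1)$ PSD matrices loses no optimality, and the optimizer of (\ref{opt1}) inherits the claimed structure.

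I expect the only genuinely delicate step to be the positive-semidefinite equivalence, namely that $\bar{\mathbf{H}}_i \mathbf{W}_k \bar{\mathbf{H}}_i^{\mathrm{\mathit{H}}} = \mathbf{0}$ forces $\bar{\mathbf{H}}_i \mathbf{F}_k = \mathbf{0}$ rather than merely constraining $\mathbf{W}_k$ weakly; this rests on the fact that a PSD matrix of the form $\mathbf{A}\mathbf{A}^{\mathrm{\mathit{H}}}$ is zero only when $\mathbf{A} = \mathbf{0}$. The remaining work, the SVD null-space identification and the algebraic substitution, is routine. A minor point worth verifying explicitly is the surjectivity of the map $\bar{\mathbf{S}}_k \mapsto \bar{\mathbf{S}}_k \bar{\mathbf{S}}_k^{H}$ onto the PSD cone, which guarantees that the reduced problem over $\mathbf{S}_k$ is genuinely equivalent to the original rather than a restriction of it.
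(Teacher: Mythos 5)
Your core argument is correct and is in fact substantially more rigorous than the paper's own justification. The paper ``proves'' this lemma with a two-sentence heuristic: it asserts that under BD the precoder splits into a part $\tilde{\mathbf{E}}_{k}$ that kills interference and a part $\bar{\mathbf{S}}_k$ for performance and power control, with no derivation that this structure is exhaustive. You instead establish \emph{necessity}: the PSD identity $\bar{\mathbf{H}}_{i} \mathbf{W}_{k} \bar{\mathbf{H}}_{i}^{H}=(\bar{\mathbf{H}}_{i} \mathbf{F}_{k})(\bar{\mathbf{H}}_{i} \mathbf{F}_{k})^{H}=\mathbf{0} \Leftrightarrow \bar{\mathbf{H}}_{i} \mathbf{F}_{k}=\mathbf{0}$ collapses the ZF constraints into $\mathbf{D}_{k}\mathbf{F}_{k}=\mathbf{0}$, and since $\operatorname{Rank}(\mathbf{D}_{k})=L_{1}$ the null space is exactly the column span of $\tilde{\mathbf{E}}_{k}$, so \emph{every} ZF-feasible precoder, in particular the optimal one, has the stated form. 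This is exactly what ``optimal structure \ldots{} without loss of optimality'' requires and what the paper's appeal to the BD literature leaves implicit.

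One claim in your closing paragraph is, however, false as stated and worth flagging: the map $\bar{\mathbf{S}}_k \mapsto \bar{\mathbf{S}}_k \bar{\mathbf{S}}_k^{H}$ with $\bar{\mathbf{S}}_k \in \mathbb{C}^{(L_2-L_1)\times N_s}$ is \emph{not} surjective onto the PSD cone unless $N_s \ge L_2-L_1$; its image is the set of PSD matrices of rank at most $N_s$. Under the paper's own parameters ($B=4$, $N_t=4$, $K=6$, $N_r=2$, so $L_2-L_1=6$ while $N_s\le N_r=2$) this containment is strict. Consequently, dropping the implicit rank constraint and optimizing over all PSD $\mathbf{S}_k$, as done in problem (\ref{opt2}), is a priori a relaxation of problem (\ref{opt1}). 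Its tightness is recovered only a posteriori: the water-filling solution produces $\boldsymbol{\Lambda}_{k}$, and hence $\mathbf{S}_k$, of rank at most $N_r$, which is realizable by some $\bar{\mathbf{S}}_k$ provided $N_s \ge N_r$ (implicitly assumed throughout the paper). This does not affect the lemma itself, whose statement retains the factorization $\mathbf{S}_{k}= \bar{\mathbf{S}}_k \bar{\mathbf{S}}_k^{H}$ and which your necessity argument fully proves, but the equivalence of the reduced problem needs this rank argument rather than the surjectivity you invoked; the paper glosses over the same point without comment.
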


\begin{proof}
    Without loss of generality, the precoding matrix $\mathbf{F}_{k}$ consists of two parts when using the BD scheme. The first part $\tilde{\mathbf{E}}_{k}$ is used to eliminate inter-user interference and the second part  $\bar{\mathbf{S}}_k$ is used to improve performance and for potential power control.
\end{proof}

Upon $\mathbf{D}_{k} \tilde{\mathbf{E}}_{k}=\mathbf{0}$, it is apparent  that $\bar{\mathbf{H}}_{i} \tilde{\mathbf{E}}_{k}=\mathbf{0},\forall k\ne i$. Therefore, by following the optimal structure of $\mathbf{W}_{k}$ proposed by  Lemma 2, we can remove the ZF constraints. The equivalent optimization problem is
\begin{align}
\label{opt2}    \max _{\mathbf{S}_{k}, \forall k \in \mathcal{K}} & 
\ \  \sum_{k=1}^{K} \omega_{k} \log \left|\mathbf{I}+\frac{1}{\sigma^{2}} \bar{\mathbf{H}}_{k} \tilde{\mathbf{E}}_{k} \mathbf{S}_{k} \tilde{\mathbf{E}}_{k}^{\mathit{H} } \bar{\mathbf{H}}_{k}^{\mathit{H} }\right| \\
\text { s.t. } & \ \ \sum_{k=1}^{K} \operatorname{Tr}\left(\mathbf{T}_{b} \tilde{\mathbf{E}}_{k} \mathbf{S}_{k} \tilde{\mathbf{E}}_{k}^{\mathit{H} }\right) \leq P_{b, \max }, \ \forall b \in \mathcal{B}. \tag{\ref{opt2}{a}}
\end{align}
The problem is convex as well as the problem (\ref{opt1}), and the Lagrangian function of the problem (\ref{opt2}) can be expressed as:
\begin{align}
    \mathcal{L}\left(\mathbf{S}_{k}, \mu_{b}, \forall k, b\right)=\sum_{k=1}^{K} \omega_{k} \log \left|\mathbf{I}+\frac{1}{\sigma^{2}}\bar{\mathbf{H}}_{k} \tilde{\mathbf{E}}_{k} \mathbf{S}_{k} \tilde{\mathbf{E}}_{k}^{\mathit{H} } \bar{\mathbf{H}}_{k}^{\mathit{H} }\right| \notag\\
-\sum_{b=1}^{B} \mu_{b}\left(\sum_{k=1}^{K} \operatorname{Tr}\left(\mathbf{T}_{b} \tilde{\mathbf{E}}_{k} \mathbf{S}_{k} \tilde{\mathbf{E}}_{k}^{\mathit{H} }\right)-P_{b, \max }\right),
\end{align}
where $\mu_{b} \geq 0,\forall b \in \mathcal{B}$ is the non-negative dual variable related to the $b$-th AP's power constraint. Next, we define the Lagrange dual function for (\ref{opt2}) as
\begin{equation}
  g\left(\left\{\mu_{b}\right\}\right)=\max  \mathcal{L}\left(\mathbf{S}_{k}, \mu_{b}, \forall k, b\right)\label{opt3} 
\end{equation}
Additionally, the definition of the dual problem of (\ref{opt2}) is 
{\begin{align}
   \min _{\mu_{b} , \forall b \in \mathcal{B}} & 
\ \  g\left(\left\{\mu_{b}\right\}\right)\label{opt4}  \\
\text { s.t. } & \ \ \mu_{b} \geq 0, \forall b \in \mathcal{B} . \tag{\ref{opt4}{a}}
\end{align}}
It  can be further known that the duality gap between the optimal solutions of (\ref{opt3}) and (\ref{opt4}) is zero by using  Slater’s condition. Therefore,  to satisfy the complementary slackness condition for the power constraints, the value of $\mu_{b}$ should be  such that
\begin{equation}
  \label{complementary}   \mu_{b}\left(\sum_{k=1}^{K} \operatorname{Tr}\left(\mathbf{T}_{b} \tilde{\mathbf{E}}_{k} \mathbf{S}_{k} \tilde{\mathbf{E}}_{k}^{\mathrm{\mathit{H} }}\right)-P_{b, \max }\right)=0, \ \forall b \in \mathcal{B} .
\end{equation}

In the following, instead of solving for the optimal $\mu_{b}$, We first detail the process of obtaining the  optimal closed-form solution of $\mathbf{S}_{k}$ with a fixed set of $\left\{\mu_{b},\forall b \in \mathcal{B}\right\}$. Note that (\ref{opt3}) can be divided into $K$ independent subproblems, the corresponding subproblem for each UE $k$ is given as
\begin{align}
  \label{opt5}  \max _{\mathbf{S}_{k} \succeq  0} \ \ \omega_{k} \log \left|\mathbf{I}+\frac{1}{\sigma^{2}} \bar{\mathbf{H}}_{k} \tilde{\mathbf{E}}_{k} \mathbf{S}_{k} \tilde{\mathbf{E}}_{k}^{\mathrm{\mathit{H} }} \bar{\mathbf{H}}_{k}^{\mathrm{\mathit{H} }}\right| 
-\operatorname{Tr}\left(\mathbf{T}_{\mu} \tilde{\mathbf{E}}_{k} \mathbf{S}_{k} \tilde{\mathbf{E}}_{k}^{\mathrm{\mathit{H} }}\right)
\end{align}
where $\mathbf{T}_{\mu}=\sum_{b=1}^{B} \mu_{b} \mathbf{T}_{b}$ is a diagonal matrix consisting of $\left\{\mu_{b},\forall b \in \mathcal{B}\right\}$.

 From (\ref{complementary}), we can observe that corresponding power constraint is tight, i.e., $\sum_{k=1}^{K} \operatorname{Tr}\left(\mathbf{T}_{b} \tilde{\mathbf{E}}_{k} \mathbf{S}_{k} \tilde{\mathbf{E}}_{k}^{\mathrm{\mathit{H} }}\right)=P_{b, \max }$, only when $\mu_{b}>0$. Furthermore, we define $\mathcal{D}=\left \{\mu_{b} | \ \mu_{b}>0 , \forall b\in \mathcal{B} \right \}$ and $D_{\mu}=\left | \mathcal{D} \right | $.  
 {To ensure that problem (\ref{opt5}) can converge to a stationary point, it holds that  $D_{\mu}=B \geq\left\lceil\frac{B N_{t}-(K-1) N_{r}}{N_{t}}\right\rceil$.}

Then, with $L_{1}=(K-1)N_{r}$ and $L_{2}=BN_{t}$, without loss of generality we assume that $D_{\mu} N_{t} \geq L_{2}-L_{1}$, since the only scenario in which we are interested is when the objective value of problem (\ref{opt5}) is bounded.
Therefore, we have $\operatorname{Rank}\left(\tilde{\mathbf{E}}_{k}^{\mathit{H} } \mathbf{T}_{\mu} \tilde{\mathbf{E}}_{k}\right)=\min \left(L_{2}-L_{1}, D_{\mu} N_{t}\right)=L_{2}-L_{1}$ and the matrix $\tilde{\mathbf{E}}_{k}^{\mathit{H} } \mathbf{T}_{\mu} \tilde{\mathbf{E}}_{k} \in \mathbb{C}^{\left(L_{2}-L_{1}\right) \times\left(L_{2}-L_{1}\right)}$ is full rank and invertible. By using $\operatorname{Tr}(\mathbf{A B})=\operatorname{Tr}(\mathbf{B} \mathbf{A})$, we have:
\begin{equation*}
      \operatorname{Tr}(\mathbf{T}_{\mu} \tilde{\mathbf{E}}_{k} \mathbf{S}_{k} \tilde{\mathbf{E}}_{k}^{\mathrm{\mathit{H} }})=\operatorname{Tr}((\tilde{\mathbf{E}}_{k}^{\mathrm{\mathit{H} }} \mathbf{T}_{\mu} \tilde{\mathbf{E}}_{k})^{1 / 2} \mathbf{S}_{k}(\tilde{\mathbf{E}}_{k}^{\mathrm{\mathit{H} }} \mathbf{T}_{\mu} \tilde{\mathbf{E}}_{k})^{1 / 2}).
\end{equation*}
We further define  $\tilde{\mathbf{S}}_{k}$ as
\begin{equation}
   \label{eq1} \tilde{\mathbf{S}}_{k}=\left(\tilde{\mathbf{E}}_{k}^{\mathit{H} } \mathbf{T}_{\mu} \tilde{\mathbf{E}}_{k}\right)^{1 / 2} \mathbf{S}_{k}\left(\tilde{\mathbf{E}}_{k}^{\mathit{H} } \mathbf{T}_{\mu} \tilde{\mathbf{E}}_{k}\right)^{1 / 2},
\end{equation}
where $\tilde{\mathbf{S}}_{k} \in \mathbb{C}^{\left(L_{2}-L_{1}\right) \times\left(L_{2}-L_{1}\right)}$. By substituting (\ref{eq1}) into (\ref{opt5}), the maximization problem is re-described as
\begin{align}
    \max _{\tilde{\mathbf{S}}_{k} \succeq 0} \ \ \omega_{k} \log \mid \mathbf{I}+\frac{1}{\sigma^{2}} \bar{\mathbf{H}}_{k} \tilde{\mathbf{E}}_{k}\left(\tilde{\mathbf{E}}_{k}^{\mathrm{\mathit{H} }} \mathbf{T}_{\mu} \tilde{\mathbf{E}}_{k}\right)^{-1 / 2} \tilde{\mathbf{S}}_{k} 
\left(\tilde{\mathbf{E}}_{k}^{\mathrm{\mathit{H} }} \mathbf{T}_{\mu} \tilde{\mathbf{E}}_{k}\right)^{-1 / 2} \tilde{\mathbf{E}}_{k}^{\mathrm{\mathit{H} }} \bar{\mathbf{H}}_{k}^{\mathrm{\mathit{H} }} \mid-\operatorname{Tr}\left(\tilde{\mathbf{S}}_{k}\right) \label{eq2}
\end{align}
where $\bar{\mathbf{H}}_{k} \tilde{\mathbf{E}}_{k}(\tilde{\mathbf{E}}_{k}^{\mathrm{\mathit{H} }} \mathbf{T}_{\mu} \tilde{\mathbf{E}}_{k})^{-1 / 2} \in \mathbb{C}^{N_{r} \times\left(L_{2}-L_{1}\right)}$. Note that its rank is $N_{r}$ since $N_{r}<L_{2}-L_{1}$, by using the  (reduced) SVD, we can get
\begin{equation}
    \bar{\mathbf{H}}_{k} \tilde{\mathbf{E}}_{k}\left(\tilde{\mathbf{E}}_{k}^{\mathrm{\mathit{H} }} \mathbf{T}_{\mu} \tilde{\mathbf{E}}_{k}\right)^{-1 / 2}=\hat{\mathbf{U}}_{k} \hat{\boldsymbol{\Sigma}}_{k} \hat{\mathbf{E}}_{k}^{\mathrm{\mathit{H} }}, \label{eq3}
\end{equation}
where $\hat{\mathbf{E}}_{k} \in \mathbb{C}^{\left(L_{2}-L_{1}\right) \times N_{r}}$, $\hat{\mathbf{E}}_{k}^{\mathrm{\mathit{H} }} \hat{\mathbf{E}}_{k}=\mathbf{I}$, and $\hat{\boldsymbol{\Sigma}}_{k}=\operatorname{diag}\left(\hat{\sigma}_{k, 1}, \ldots, \hat{\sigma}_{k, N_{r}}\right)$. Furthermore, substituting (\ref{eq3}) into (\ref{eq2}), the optimization problem becomes 
\begin{equation}
    \max _{\mathbf{\Lambda}_{k}} \ \ \omega_{k} \log \left|\mathbf{I}+\frac{1}{\sigma^{2}} \hat{\boldsymbol{\Sigma}}_{k}^{2} \boldsymbol{\Lambda}_{k}\right|-\operatorname{Tr}\left(\boldsymbol{\Lambda}_{k}\right)
\end{equation}
where $\boldsymbol{\Lambda}_{k}=\hat{\mathbf{E}}_{k}^{\mathrm{\mathit{H} }} \tilde{\mathbf{S}}_{k} \hat{\mathbf{E}}_{k}$. By applying Hadamard’s inequality and water-filling algorithm, the optimal $\boldsymbol{\Lambda}_{k} $ can be expressed as 
\begin{equation}
    \boldsymbol{\Lambda}_{k}=\operatorname{Diag}\left(\lambda_{k, 1}, \ldots \lambda_{k, N_{r}}\right),
\end{equation}
where $\lambda_{k, i}=\max \left(0, \omega_{k}-\frac{1}{\hat{\sigma}_{k, i}^{2}}\right)$.
 
To sum up, subject to $\tilde{\mathbf{S}_{k}}=\hat{\mathbf{E}}_{k} \boldsymbol{\Lambda}_{k} \hat{\mathbf{E}}_{k}^{\mathrm{\mathit{H} }}$ and Lemma 2,  the optimal solutions of (\ref{opt5}) and (\ref{opt1}) are
\begin{equation}
\mathbf{S}_{k}=\left(\tilde{\mathbf{E}}_{k}^{\mathrm{\mathit{H} }} \mathbf{T}_{\mu} \tilde{\mathbf{E}}_{k}\right)^{-\frac{1}{2}} \hat{\mathbf{E}}_{k} \boldsymbol{\Lambda}_{k} \hat{\mathbf{E}}_{k}^{\mathrm{\mathit{H} }}\left(\tilde{\mathbf{E}}_{k}^{\mathrm{\mathit{H} }} \mathbf{T}_{\mu} \tilde{\mathbf{E}}_{k}\right)^{-\frac{1}{2}} \label{Qk},
\end{equation}
\begin{align}
\mathbf{W}_{k}^{\mathrm{opt}}&=\tilde{\mathbf{E}}_{k}\left(\tilde{\mathbf{E}}_{k}^{\mathrm{\mathit{H} }} \mathbf{T}_{\mu}^{\mathrm{opt}} \tilde{\mathbf{E}}_{k}\right)^{-\frac{1}{2}} \hat{\mathbf{E}}_{k} \boldsymbol{\Lambda}_{k}
 \hat{\mathbf{E}}_{k}^{\mathrm{\mathit{H} }}\left(\tilde{\mathbf{E}}_{k}^{\mathrm{\mathit{H} }} \mathbf{T}_{\mu}^{\mathrm{opt}} \tilde{\mathbf{E}}_{k}\right)^{-\frac{1}{2}} \tilde{\mathbf{E}}_{k}^{\mathrm{\mathit{H} }} \label{Wk}.
\end{align} 
for each UE $k$, where $\mathbf{T}_{\mu}^{\mathrm{opt}}=\sum_{b=1}^{B} \mu_{b}^{\mathrm{opt}} \mathbf{T}_{b}$.

Notice that the above discussion of solving  the $\mathbf{W}_{k}^{\mathrm{opt}}$ is  under the assumption that we have obtained the optimal $\mu_{b}$. Next, we will focus on  solving  the optimal $\mu_{b}$. In (\ref{complementary}), we mentioned that the $\mu_{b}$ should satisfy the complementary condition, thus we construct a function concerning  $\mu_{b}$, which can be expressed as $f_{b}\left(\mu_{b}\right)=\sum_{k=1}^{K} \operatorname{Tr}\left(\mathbf{T}_{b} \tilde{\mathbf{E}}_{k} \mathbf{S}_{k} \tilde{\mathbf{E}}_{k}^{\mathrm{\mathit{H} }}\right)=\sum_{k=1}^{K} \operatorname{Tr}\left(\mathbf{T}_{b} \mathbf{W}_{k}\right)$. We further substitute (\ref{Wk}) into $f_{b}\left(\mu_{b}\right)$, then we have  
\begin{small}
\begin{align}
   f_{b}\left(\mu_{b}\right)
={} & \sum_{k=1}^{K} \operatorname{Tr}\left(\mathbf{T}_{b} \tilde{\mathbf{E}}_{k}\left(\tilde{\mathbf{E}}_{k}^{\mathrm{\mathit{H} }} \mathbf{T}_{\mu} \tilde{\mathbf{E}}_{k}\right)^{-\frac{1}{2}} \tilde{\mathbf{S}}_{k}\left(\tilde{\mathbf{E}}_{k}^{\mathit{H} } \mathbf{T}_{\mu} \tilde{\mathbf{E}}_{k}\right)^{-\frac{1}{2}} \tilde{\mathbf{E}}_{k}^{\mathrm{\mathit{H} }}\right) \notag\\
\label{f_mu_b}\stackrel{(a)}{=}{} & \operatorname{Tr}\left(\sum_{k=1}^{K} \mathbf{Z}_{b, k}^{1}\left(\mu_{b} \mathbf{Z}_{b, k}^{1}+\mathbf{Z}_{b, k}^{2}\right)^{-\frac{1}{2}} \tilde{\mathbf{S}}_{k}\left(\mu_{b} \mathbf{Z}_{b, k}^{1}+\mathbf{Z}_{b, k}^{2}\right)^{-\frac{1}{2}}\right) ,
\end{align}
\end{small}
where $\mathbf{Z}_{b, k}^{1}=\tilde{\mathbf{E}}_{k}^{\mathrm{\mathit{H} }} \mathbf{T}_{b} \tilde{\mathbf{E}}_{k}$ and $\mathbf{Z}_{b, k}^{2}=\sum_{i \neq b}^{B} \mu_{i} \tilde{\mathbf{E}}_{k}^{\mathrm{\mathit{H} }} \mathbf{T}_{i} \tilde{\mathbf{E}}_{k}$, and (a) is obtained by the fundamental properties of matrix trace. 
Observing the monotonicity of $f_{b}\left(\mu_{b}\right)$ can be challenging; therefore, we introduce the following lemma.
 \begin{lemma}
$f_{b}\left(\mu_{b}\right)$ is a monotonically decreasing function when $\mu_{b}\ge 0$. 
\end{lemma}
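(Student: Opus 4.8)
The plan is to avoid differentiating the matrix expression in (\ref{f_mu_b}) directly and instead to establish monotonicity through a revealed-preference (monotone comparative statics) argument, exploiting that $f_{b}(\mu_{b})$ is precisely the power drawn by AP $b$ evaluated at the maximizer of the inner problem (\ref{opt5}). Writing $\mathbf{M}_{k}=\mu_{b}\mathbf{Z}_{b,k}^{1}+\mathbf{Z}_{b,k}^{2}$, recall that for fixed $\{\mu_{i}\}_{i\neq b}$ the matrices $\{\mathbf{S}_{k}\}$ used to build $f_{b}$ are the optimizers $\{\mathbf{S}_{k}(\mu_{b})\}$ of (\ref{opt5}), and that the inner objective splits as $\mathcal{L}(\{\mathbf{S}_{k}\};\mu_{b})=\Psi(\{\mathbf{S}_{k}\})-\mu_{b}\,p_{b}(\{\mathbf{S}_{k}\})+C(\mu_{b})$, where $p_{b}(\{\mathbf{S}_{k}\})=\sum_{k}\operatorname{Tr}(\mathbf{T}_{b}\tilde{\mathbf{E}}_{k}\mathbf{S}_{k}\tilde{\mathbf{E}}_{k}^{H})=f_{b}(\mu_{b})$ is exactly the AP-$b$ power, the term $C(\mu_{b})=\mu_{b}P_{b,\max}$ is independent of $\{\mathbf{S}_{k}\}$, and $\Psi$ (the weighted sum-rate minus the price terms of all the \emph{other} APs) is independent of $\mu_{b}$.

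I would then pick any two prices $\mu_{b}^{(1)}<\mu_{b}^{(2)}$ with corresponding maximizers $\{\mathbf{S}_{k}^{(1)}\}$ and $\{\mathbf{S}_{k}^{(2)}\}$, and write the two suboptimality inequalities $\mathcal{L}(\{\mathbf{S}_{k}^{(1)}\};\mu_{b}^{(1)})\ge\mathcal{L}(\{\mathbf{S}_{k}^{(2)}\};\mu_{b}^{(1)})$ and $\mathcal{L}(\{\mathbf{S}_{k}^{(2)}\};\mu_{b}^{(2)})\ge\mathcal{L}(\{\mathbf{S}_{k}^{(1)}\};\mu_{b}^{(2)})$. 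In each inequality the constant $C$ and the common $\Psi$ cancel, and adding the two reduced inequalities leaves only the price terms, which rearrange to $(\mu_{b}^{(2)}-\mu_{b}^{(1)})\big(p_{b}(\{\mathbf{S}_{k}^{(1)}\})-p_{b}(\{\mathbf{S}_{k}^{(2)}\})\big)\ge 0$. Since $\mu_{b}^{(2)}>\mu_{b}^{(1)}$, this forces $f_{b}(\mu_{b}^{(1)})\ge f_{b}(\mu_{b}^{(2)})$, i.e. $f_{b}$ is non-increasing; under the strict concavity guaranteeing a unique bounded maximizer the inequality is strict, yielding strict monotonic decrease. The argument needs only that $\{\mathbf{S}_{k}(\mu_{b})\}$ is well defined and bounded, which is exactly the full-rank condition $\operatorname{Rank}(\tilde{\mathbf{E}}_{k}^{H}\mathbf{T}_{\mu}\tilde{\mathbf{E}}_{k})=L_{2}-L_{1}$ already imposed before (\ref{f_mu_b}).

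I expect the main obstacle to lie in the tempting but hazardous alternative of differentiating (\ref{f_mu_b}) term by term. There the factor $(\mu_{b}\mathbf{Z}_{b,k}^{1}+\mathbf{Z}_{b,k}^{2})^{-1/2}$ makes $d/d\mu_{b}$ awkward, since the matrix square root obeys only a Lyapunov (Daleckii--Krein) relation rather than a simple product rule, and, more seriously, the natural per-term object $\mathbf{M}_{k}^{-1/2}\mathbf{Z}_{b,k}^{1}\mathbf{M}_{k}^{-1/2}$ is \emph{not} monotone in the L\"owner order in general: its eigenvectors rotate with $\mu_{b}$, as a rank-one example already shows. Hence one cannot conclude that $\operatorname{Tr}(\mathbf{Z}_{b,k}^{1}\mathbf{M}_{k}^{-1/2}\tilde{\mathbf{S}}_{k}\mathbf{M}_{k}^{-1/2})$ decreases by treating $\tilde{\mathbf{S}}_{k}$ as an arbitrary fixed positive-semidefinite matrix; the monotonicity genuinely relies on the coupling between $\tilde{\mathbf{S}}_{k}$ (the water-filling solution built from the SVD of $\bar{\mathbf{H}}_{k}\tilde{\mathbf{E}}_{k}\mathbf{M}_{k}^{-1/2}$) and $\mathbf{M}_{k}$ itself.

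The revealed-preference route is attractive precisely because it encodes this coupling automatically: $\{\mathbf{S}_{k}(\mu_{b})\}$ is always taken to be the joint optimizer, so no separate accounting of the $\mu_{b}$-dependence of $\tilde{\mathbf{S}}_{k}$, and no matrix-square-root differentiation, is required. If a differentiation-based proof is nonetheless preferred for concreteness, I would instead establish the scalar envelope identity $f_{b}(\mu_{b})=-\partial g(\{\mu_{b}\})/\partial\mu_{b}+P_{b,\max}$ together with convexity of the dual value $g$ in $\mu_{b}$, from which $f_{b}$ inherits monotonicity; but I regard the two-line suboptimality inequality above as the cleanest and most robust path to the stated claim.
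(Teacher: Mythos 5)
Your proof is correct, and it follows a genuinely different route from the paper's. The paper proves Lemma 3 exactly by the differentiation strategy you warn against: it sets $g_{b}(\mu_{b})=(\mu_{b}\mathbf{Z}_{b,k}^{1}+\mathbf{Z}_{b,k}^{2})^{-1/2}$, applies the chain rule in (\ref{df}) while holding $\tilde{\mathbf{S}}_{k}$ fixed, uses the power-rule derivative (\ref{dg}) for the inverse matrix square root, and concludes negativity of $\mathrm{d}f_{b}/\mathrm{d}\mu_{b}$ from positive-definiteness. Both of your concerns apply to that proof: (\ref{dg}) is valid only when $\mu_{b}\mathbf{Z}_{b,k}^{1}+\mathbf{Z}_{b,k}^{2}$ and $\mathbf{Z}_{b,k}^{1}$ commute (in general one has only the Daleckii--Krein/Lyapunov relation), and $\tilde{\mathbf{S}}_{k}$ is not a constant---it is the water-filling solution built from the SVD in (\ref{eq3}), whose factors depend on $\mathbf{T}_{\mu}$ and hence on $\mu_{b}$; this coupling is precisely what Algorithm 3 implements when it recomputes the $\mathbf{S}_{k}$'s via (\ref{Qk}) at every trial $\mu_{b}$. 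Your revealed-preference argument---two optimality inequalities for the inner maximization (\ref{opt3}) at prices $\mu_{b}^{(1)}<\mu_{b}^{(2)}$, added so that $\Psi$ and the constants $\mu_{b}P_{b,\max}$ cancel---avoids matrix calculus entirely, automatically encodes the dependence of the optimizer on $\mu_{b}$, and is equivalent to convexity of the dual $g(\{\mu_{b}\})$ plus the envelope identity; it is therefore both more rigorous and more general, since it never uses the specific form of the rate function, only that it is independent of $\mu_{b}$. What the paper's route buys is brevity and an explicit derivative expression; what yours buys is a proof that actually covers the function the bisection operates on. One caveat: your two-inequality argument delivers that $f_{b}$ is non-increasing, and the upgrade to strict decrease does not follow from uniqueness of the maximizer alone---if the optimal allocation puts zero power on AP $b$ (as happens once the water-filling levels vanish for large $\mu_{b}$), the same optimizer serves both prices and $f_{b}$ is flat at zero. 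The honest statement is ``non-increasing,'' which, together with continuity and $f_{b}(\infty)=0$, is all that the bisection search in Algorithm 3 requires.
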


\begin{proof}
    Please see the appendix.
\end{proof}

According to Lemma 3, if $f_{b}\left(0\right) \le \mathrm{P}_{b, \max}$, then 
$\mu_{b}^{\mathrm{opt}} =0$. 
Otherwise, since $f_{b}\left(\infty \right) =0$, the optimal $\mu_{b}$ can be found  using the bisection method, which is based on the following equation:
\begin{equation}
f_{b}\left(\mu_{b}^{\mathrm{opt}}\right)=P_{b, \max }.
\end{equation}
Then, by substituting  all $\mu_{b}^{\mathrm{opt}}$ into (\ref{Wk}), we derive the optimal solution for $\mathbf{W}_{k}^{\mathrm{opt}}, \forall k \in \mathcal{K}$.

We summarize the algorithm for solving the problem (\ref{opt1}) in Algorithm 3.

\begin{algorithm}
	\renewcommand{\algorithmicrequire}{\textbf{Input:}}
	\renewcommand{\algorithmicensure}{\textbf{Output:}}
	\caption{Joint BD Algorithm to solve Problem in (\ref{opt1})}
	\label{alg1}
	\begin{algorithmic}[1]
        \STATE \textbf{Initialization}: $\mu_{b}\ge 0$, the upper bound $\mu_{b}^{up}$ and the lower bound $\mu_{b}^{lp}$, $\forall b \in \mathcal{B}$, the accuracy $\varepsilon $ and $\zeta$.
		\STATE  Given $\mu_{b}$'s , calculate initial $\mathbf{S}_{k}$ based on (\ref{Qk}).
		\REPEAT
        \STATE \textbf{for} $b=1:B$ \textbf{do}
		\STATE \quad Fixed $\mu_{i},\forall i\ne b$, using the following bisection search method to find $\mu_{b}$.
        \STATE \quad\textbf{while} $\mu_{b}^{up}-\mu_{b}^{lp}>\varepsilon $ \textbf{do}  
		\STATE \quad\quad If $f_{b}\left(0\right) \le \mathrm{P}_{b, \max}$ holds, then $\mu_{b}=0$, and jump out of the while loop; Otherwise go to the next step.
		\STATE \quad\quad Calculate $\mu_{b}=(\mu_{b}^{up}+\mu_{b}^{lp})/2$.
		\STATE \quad\quad If  $f_{b}\left(\mu_{b}\right) \ge \mathrm{P}_{b, \max}$, let $\mu_{b}^{lp}=\mu_{b}$; Otherwise, let $\mu_{b}^{up}=\mu_{b}$.
          \STATE \quad\quad Update $\mathbf{T}_{\mu}$ based on $\mathbf{T}_{\mu}=\sum_{b=1}^{B} \mu_{b} \mathbf{T}_{b}$.
        \STATE \quad\quad Update all $\mathbf{S}_{k}$'s based on (\ref{Qk}).
        \STATE \quad\textbf{end while}
        \STATE \textbf{end for}
		\UNTIL All $\mu_{b}$'s coverage and the error is less than $\zeta$ .
		\STATE   Calculate $\mathbf{W}_{k}^{\mathrm{opt}}$ based on Lemma 2.
		\ENSURE   $\mathbf{W}_{k}^{\mathrm{opt}}, \forall  k\in \mathcal{K} $
	\end{algorithmic}  
\end{algorithm}

\begin{algorithm}
	\renewcommand{\algorithmicrequire}{\textbf{Input:}}
	\renewcommand{\algorithmicensure}{\textbf{Output:}}
	\caption{{The Two-Stage Framework to solve Problem in (\ref{X1})}}
	\label{alg4}
	\begin{algorithmic}[1]
        \STATE \textbf{Initialization}: $\mu_{b}$, $\mu_{b}^{up}$, $\mu_{b}^{lp}$, $\varepsilon $, $\zeta$, $t$, $\epsilon $, $\mathbf{\Phi_m } $, $\mathbf{G}_{b,m}^{r}$, $\mathbf{H}_{b,k}^{d}$, $\mathbf{H}_{m,k}^{r}$, $\forall b \in \mathcal{B}, \forall k \in \mathcal{K}, \forall m \in \mathcal{M}$. 
		\STATE { Solve  (\ref{phistar}) via MM Algorithm (lines 2-9) and output the utility function $U_{km}, \forall m \in \mathcal{M}, \forall k \in \mathcal{K}$.}
        \STATE {Given $U_{km}$, solve $\gamma$ via many-to-many matching Algorithm and output  the association variables $\mathbf{c}^{\star}$.}
        \STATE {Given $\mathbf{c}$, solve the problem in (\ref{opt6}) via MM Algorithm (lines 10-17) and output the RIS shift matrices $\boldsymbol{\Phi}_{m}^{\star} $.}
        \STATE {Given $\boldsymbol{\Phi}_{m}^{\star} $ and  $\mathbf{c}^{\star}$, solve the problem in (\ref{opt1}) via BD algorithm and output the solution $\mathbf{W}^{\star}$.}
		\STATE {According to $\mathbf{c}^{\star}, \boldsymbol{\Phi}_{m}^{\star} , \mathbf{W}^{\star}$, calculate the WSR.}
	\end{algorithmic}  
\end{algorithm}

\newpage
\subsection{Overall Algorithm and Complexity Analyses}
Based on the above discussion, we outline a detailed procedure for solving the problem (\ref{X1}) in \textit{Algorithm 4}. 

{We now proceed to analyze the complexity of  \textit{Algorithm 4}.  In \textit{step 2}, the primary computational burden of calculating $U_{km}$ lies in evaluating the bracketed term in (\ref{phik}), which has a complexity of $\mathcal{O}\left(N^{2}\right)$. Since the MM algorithm typically converges within a few iterations, we disregard the number of iterations, yielding a total complexity for \textit{step 2} of $\mathcal{O}\left(KMN^{2}\right)$.  In \textit{step 3}, the complexity of the many-to-many matching algorithm depends on the number of UEs and the length of each UE’s preference list $p_k$, with a worst-case complexity of $\mathcal{O}\left(MK\right)$. The complexity of  \textit{step 4} is $\mathcal{O}\left(MN^{2}\right)$.
In \textit{step 5}, the complexity of joint BD algorithm is primarily determined by matrix multiplication, inversion, and SVD operations. The complexity of updating $\mathbf{S}_k$ through (\ref{Qk}) is $\mathcal{O}\left((BN_t-(K-1)N_r)^{3}\right)$, while the complexity of solving for the Lagrangian dual variables $\mu_b$ can be considered negligible.  Thus, the overall complexity of the joint BD algorithm is $\mathcal{O}\left(BK(B N_t-(K-1)N_r)^{3}\right)$.
In summary, the total complexity of \textit{Algorithm 4} is given by }
\begin{equation*}
   { \mathcal{O}\left(\mathrm{max} (BK(B N_t-(K-1)N_r)^{3},KMN^{2})\right).}
\end{equation*}
{In contrast, the complexity of the alternating optimization algorithm in a conventional RIS-assisted CF system \cite{twotimescale} is given by}
\begin{equation*}
{\mathcal{O}\left(I_{\mathrm{iter}}\left( BK^2N_{t}N_{r} + KN_{r}^{3} + K^2N_{r}^{2} \right. \right.} 
{\left. \left. + I_aB^{2}K^{2}N_{t}^{2}N_{r}^{2} + I_bM^2N^2 \right) \right),}
\end{equation*}
{where $I_{\mathrm{iter}}$, $I_a$ and $I_b$ denote the number of iterations. Evidently, the proposed algorithm achieves a lower complexity.}

\begin{table*}[h!t]
\scriptsize
   \centering
   \caption{ Simulation Parameters }
   \label{tab1}
  	\resizebox{1\textwidth }{!}{     
\begin{tabular}{|c|c|c|c|}\hline
Number of transmit antennas of each AP, $N_t$&                     4&                  Number of reflecting elements of each RIS, $N$&                100\\ \hline
Number of receive antennas of each UE, $N_r$&        2&            
 Number of APs, $B$&   4\\ \hline
Number of UEs, $K$&                     6&                  
Number of RISs, $M$&                4\\ \hline
Maximum transmit power of the AP, $\mathrm{P}_{b, \max}$&                     23 dBm&    Noise Power, $\sigma^2$&                -100 dBm\\ \hline
Weighted components, $\omega_k, \forall k$&                     1&    
Initial dual variables, $\mu_b, \forall b$&                5\\ \hline
The convergence accuracy $\varepsilon $ and $\zeta$&                     $10^{-4}$&  
The convergence accuracy $\epsilon $&                $10^{-3}$\\ \hline
Path loss exponent $\alpha_{\mathrm{AU}}$&                     4&  
Path loss exponents $\alpha_{\mathrm{AR}}$ and $\alpha_{\mathrm{RU}}$&               2.2\\ \hline
The height of AP&                     10 m&  
The height of RIS&               6 m\\ \hline
The height of UE&                     1.5 m&  
{frequency $f_c$}&            {   3.5 GHz}\\ \hline
		\end{tabular}  }             
\end{table*}

\section{Performance Evaluation} \label{secfour}
This section presents simulation results under various conditions to verify the performance of the proposed RIS-assisted CF MIMO system. Considering a three-dimensional coordinate system, where APs are distributed on the four vertices of a square with a side length of $L_{\mathrm{AP}}=300$ m, RISs are uniformly distributed in a circle with a diameter of $L_{\mathrm{RIS}}=200$ m, UEs are randomly distributed inside a square with a side length of $L_{\mathrm{UE}}=100$ m. The heights of the AP, RIS, and UE are $10$ m, $6$ m, and $1.5$ m, respectively. {Unless otherwise noted, in the initial setup, $B=4$ APs, $M=4$ RISs, $K=6$ UEs, $U_{\mathrm{match}}=K/2$, and $R_{\mathrm{match}}=M/2$.} 
Each AP has $N_t=4$ antennas, each RIS has $N=100$ passive elements, and each UE has $N_r=2$ antennas. We further set the maximum AP power to $\mathrm{P}_{b, \max}=23$ dBm,  noise power to $\sigma^2=-100$ dBm, $\omega_{k}=1$, and  $\delta =0.05$.

For the channel model, we consider the same setup as in \cite{setup1, 
 setup2}. The large-scale path loss model is shown below:
\begin{equation}
  {  \mathrm{PL}=-\mathrm{PL}_{0}-10 \alpha \log _{10}\left(\frac{d}{d_{0}}\right)-20\log _{10}\left(f_c\right), }
\end{equation}
 where $\alpha$ is the
path loss exponent and $d$ is the route distance in meters, and the reference distance is set to $d_{0}=1$ m. $\mathrm{PL}_{0}$ denotes the pathloss at the distance of $1$ meter, which is set to $32.4$ dB based on the 3GPP UMi model\cite{3GPP38901}, and we set carrier frequency $f_c=3.5$ GHz. Since there are typically numerous obstacles between the AP and UE, we set $\alpha_{\mathrm{AU}}=4$. In addition, the RIS-aided link has a higher likelihood of encountering nearly free-space path loss since the RIS is usually carefully placed, thus, we set $\alpha_{\mathrm{AR}}=\alpha_{\mathrm{RU}}=2.2$. For small-scale fading channels, we consider the general model as Rician fading, which follows
\begin{equation}
    \mathbf{H}=\sqrt{\frac{\beta_{\mathrm{}}}{1+\beta_{\mathrm{}}}} \mathbf{H}^{\mathrm{LoS}}+\sqrt{\frac{1}{1+\beta_{\mathrm{}}}} \mathbf{H}^{\mathrm{NLoS}},
\end{equation}
\begin{equation}
    \mathbf{G}=\sqrt{\frac{\beta_{\mathrm{}}}{1+\beta_{\mathrm{}}}} \mathbf{G}^{\mathrm{LoS}}+\sqrt{\frac{1}{1+\beta_{\mathrm{}}}} \mathbf{G}^{\mathrm{NLoS}},
\end{equation}
where $\beta$ denotes the Rician factor, $\mathbf{H}^{\mathrm{LoS}}$ and $\mathbf{G}^{\mathrm{LoS}}$ are the LoS component, and $\mathbf{H}^{\mathrm{NLoS}}$ and $\mathbf{G}^{\mathrm{NLoS}}$ denote Rayleigh fading component. We use the UPA model at RIS, and the ULA model at AP and UE. {The RIS contains $N = N_h \times N_v$ elements, where $N_h$ and $N_v$ are the numbers of elements in the horizontal and vertical direction, respectively, and $N_h=N_v=10$.}
{We further set $\beta_{\mathrm{AR}}=\beta_{\mathrm{RU}}=3$, and $\beta_{\mathrm{AU}}\to 0 $.}
Detailed parameters can be found in Table \ref{tab1}.

It is worth noting that the following numerical results are obtained by averaging $500$ independent channel generations. To confirm the viability and efficacy of the proposed algorithm, we propose to compare it with the following benchmarks:
\begin{itemize}
    \item \textbf{Without RIS:} All APs serve UEs directly without RIS. Set all RIS phase shift matrices to zero matrices, i.e., $\mathbf{\Phi}_m=\mathbf{0}$.
    \item \textbf{Random phase:} The phases of all RIS elements are randomly generated within 0 to $2\pi$. 
    \item \textbf{Discrete phase\cite{discreteRIS}:} In practice, the phase of RIS element tends to be discrete due to limited precision. Assuming discrete phase $\theta_{m,n} \in  \mathcal{F} ,\forall m,n$, where $\mathcal{F}=\left\{0, \frac{2 \pi}{2^{i}}, \frac{2 \pi \times 2}{2^{i}}, \cdots, \frac{2 \pi \times\left(2^{i}-1\right)}{2^{i}}\right\}$ and $i$ denotes phase resolution. In this scheme, all steps are the same as in Algorithm 4 except for the projection of the optimized $\theta_{m,n}$ onto the nearest discrete value in $\mathcal{F}$.
    \item \textbf{Full association\cite{yao_robust_2023}:} Like the conventional RIS-assisted cell-free network, all RISs serve all UEs simultaneously, i.e., $c_{m,k}=1, \forall m,k$.
    \item \textbf{Direct link blocked:} The channels between all APs and all UEs are severely obstructed, i.e., $\mathbf{H}_{b,k}^d=\mathbf{0}, \forall b,k$. 
    {\item \textbf{Multicell network\cite{setup2}:} In contrast to the CF network, each AP provides service to UEs within its coverage area.}
    \item {\textbf{AO-based scheme\cite{twotimescale}}: Improving WSR in RIS-assisted cell-free network using alternating optimization.}
\end{itemize}

\begin{figure}[H]
    \centering
    \includegraphics[width=0.8\linewidth]{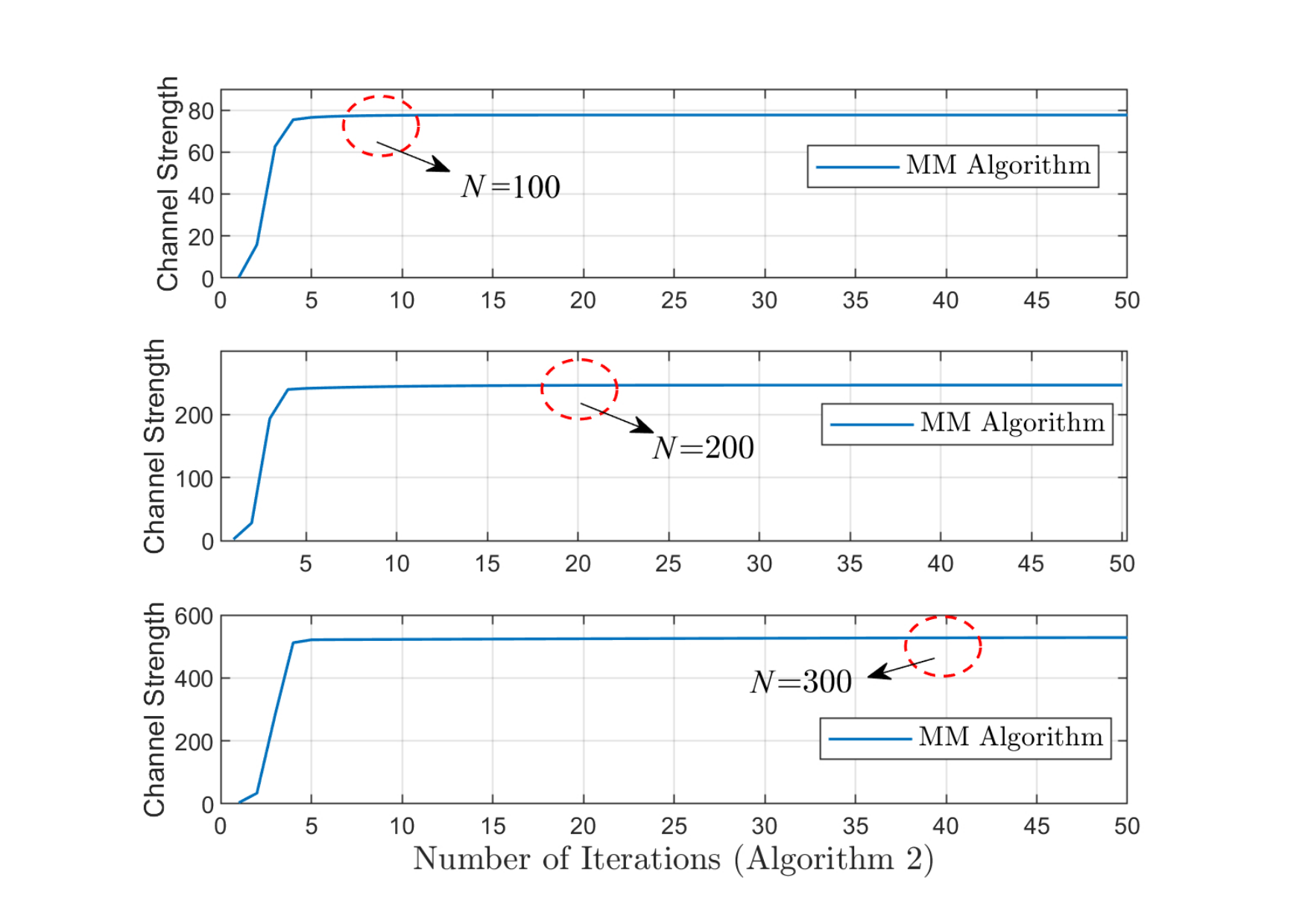}
    \caption{Convergence behavior of Algorithm 2 when $N=100, 200, 300$. }
    \label{fig4}
\end{figure}

\begin{figure}
    \centering
    \includegraphics[width=0.8\linewidth]{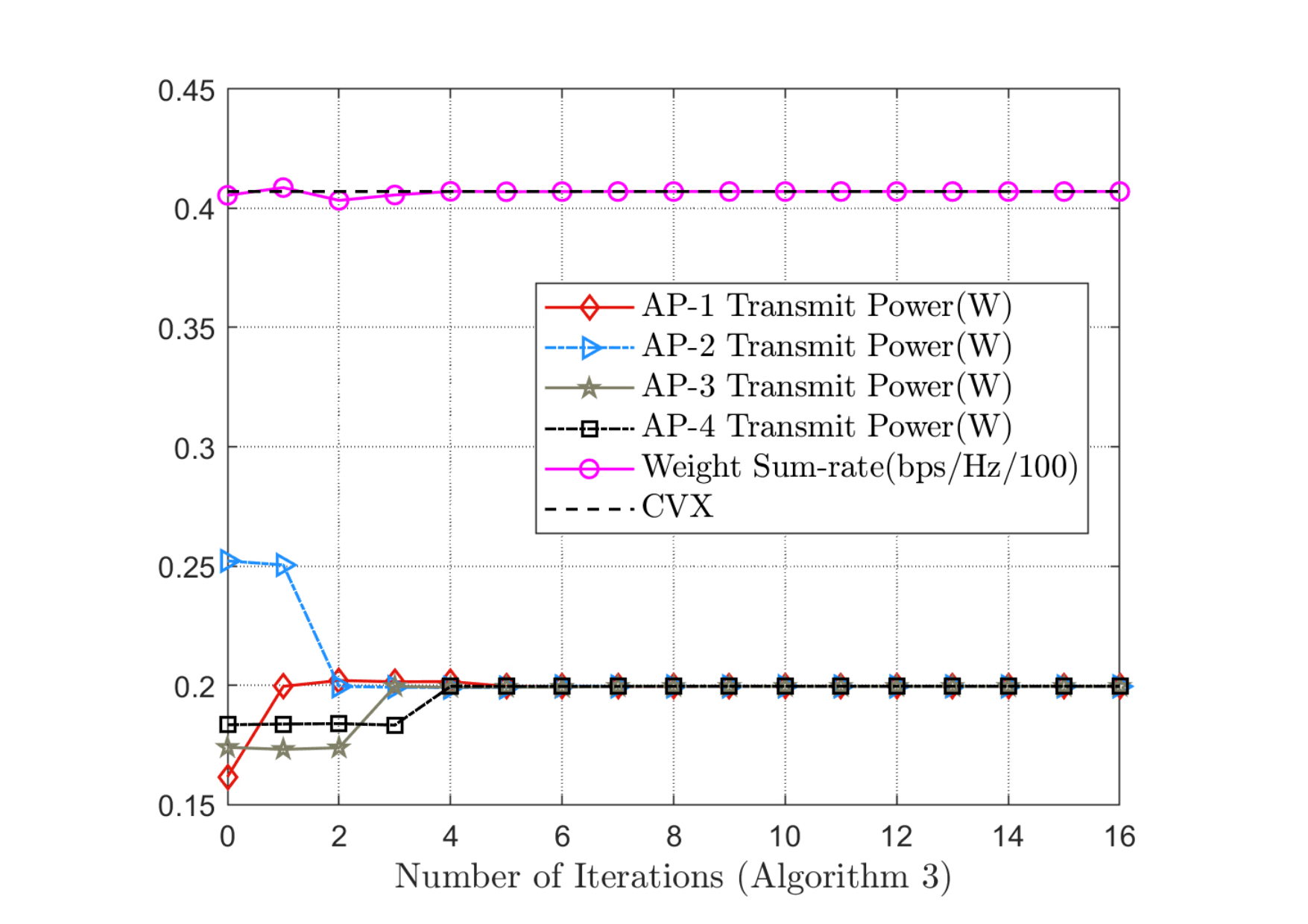}
    \caption{Convergence behavior of Algorithm 3 when $B=4$.}
    \label{fig3}
\end{figure}

\begin{table}[ht]
    \centering
    \caption{{Comparison of computing time}}
    \begin{tabular}{|c|c|c|}\hline
          & {Proposed scheme} &{ AO-based scheme\cite{twotimescale}}  \\ \hline
      {  $N_t=4, N=100$ }&{ 0.3835\,s} &{42.9506\,s } \\\hline
       { $N_t=8, N=200$} & {2.2079\,s }& {142.1823\,s } \\\hline
      {  $N_t=16, N=300$} & {10.3083\,s} &{ 256.9875\,s } \\ \hline
    \end{tabular}
\end{table}

 \subsection{Convergence Behavior and Computing Time}

 For the phase shifts optimization of the RIS, Fig. 3 illustrates the convergence performance of Algorithm 2. As expected, the number of convergence iterations of the MM algorithm increases with the number of RIS elements. However, for different values of $N$, the number of algorithm iterations does not exceed $10$. Additionally, the increase in the number of RIS elements strengthens the channel gain. 
 In Fig. 4, we show the convergence behavior of Algorithm 3. The initial $\mu_b$'s are set to $10$, $\varepsilon$ and $\zeta$ are set to $10^{-4}$. It can be observed that the AP transmit power and the weighted sum rate converge in an oscillatory manner within $10$ iterations. The converged power levels of all four APs reach the maximum allowed constraint of $23$ dBm. Notably, the algorithm achieves the same result as the CVX solver\cite{cvx}, which demonstrates the effectiveness of the algorithm.
   
{In Table II, we compare the average computation time of the proposed scheme with that of the AO-based scheme under different numbers of AP antennas and RIS elements. These simulations were conducted using MATLAB R2022b on a hardware setup with a 13th Gen Intel(R) Core(TM) i5-13600K CPU (3.50 GHz) and 32.0 GB of RAM. It can be clearly observed that the proposed scheme is more efficient, as it requires substantially less time to compute compared to the AO-based scheme across all these tested configurations, especially when the number of AP antennas and number of RIS elements becomes large.}

\subsection{Impact of Key Parameters}

\begin{figure}
    \centering
    \includegraphics[width=0.8\linewidth]{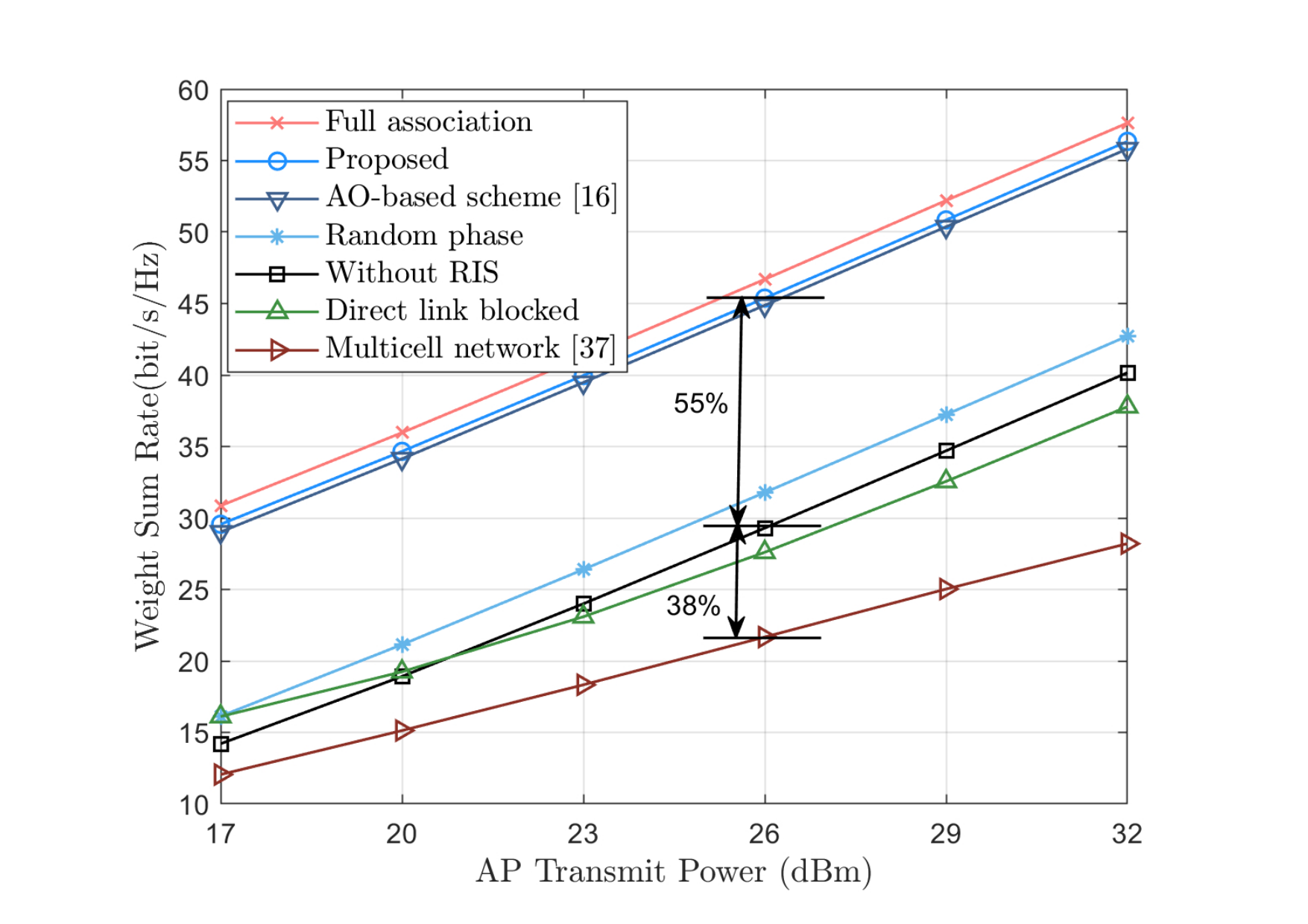}
    \caption{Weighted sum rate versus the AP transmit power $\mathrm{P}_{b, \max}$. }
    \label{AP_power}
\end{figure}

Fig. \ref{AP_power} shows a comparison of weighted sum rate for different cases of maximum AP transmit power. From Fig. \ref{AP_power}, we can make the following observations. The weighted sum rate increases rapidly as $\mathrm{P}_{b, \max}$ increases. {Particularly, the ``\textit{Proposed}" performs as well as or even better than the ``\textit{AO-based scheme}", showing that our scheme is able to achieve good performance with low complexity.} {Moreover, when $\mathrm{P}_{b, \max}=26$ dBm, the ``\textit{Without RIS}" enhances the network capacity by 38$\%$ compared to ``\textit{Multicell network}".
 This indicates that, compared to traditional multicell (or cellular) networks, the collaborative service provided by multiple APs in cell-free networks can further enhance the overall performance of the system.}
 At the same time, the ``\textit{Proposed}" improves the system performance by 55$\%$ compared to ``$\textit{Without RIS}$", demonstrating the advantages that RIS brings to the cell-free system.
 {However, introducing RIS-UE association results in some performance loss, as seen in the comparison between the “\textit{Proposed}'' and the ``\textit{Full association}''. This loss occurs because, with RIS-UE association, each RIS is restricted to serving only a subset of users, thereby limiting the reflective assistance that other users could otherwise receive. In a fully associated setup, every RIS can contribute to enhancing the signal quality for all users, maximizing the overall channel gain across the network. By contrast, RIS-UE association reduces the total effective channel gain, as users outside a specific RIS's service range cannot benefit from its reflected signals. Despite this trade-off, the performance loss remains within an acceptable range; moreover, RIS-UE association significantly reduces the channel acquisition overhead.}
 
Notably, the  ``\textit{Random phase}" offers little performance gain over the ``\textit{Without RIS}", which proves the necessity of RIS phase optimization. ``\textit{Without RIS}" represents that the APs allocate all power to the AP-UE direct links, while ``\textit{Direct link blocked}" represents that the APs allocate all power to the AP-RIS-UE indirect links. It can be visualized in Fig. \ref{AP_power} that the performance of ``\textit{Direct link blocked}" becomes progressively worse than that of ``\textit{Without RIS}" when $\mathrm{P}_{b, \max}$ increases. The reason is that the equivalent path loss of the AP-RIS-UE indirect link is higher than the path loss of the AP-UE  direct link\cite{setup2}. When the AP power is high enough, the AP prefers to allocate power to the direct link, which leads to better performance contribution.

\begin{figure}
    \centering
    \includegraphics[width=0.8\linewidth]{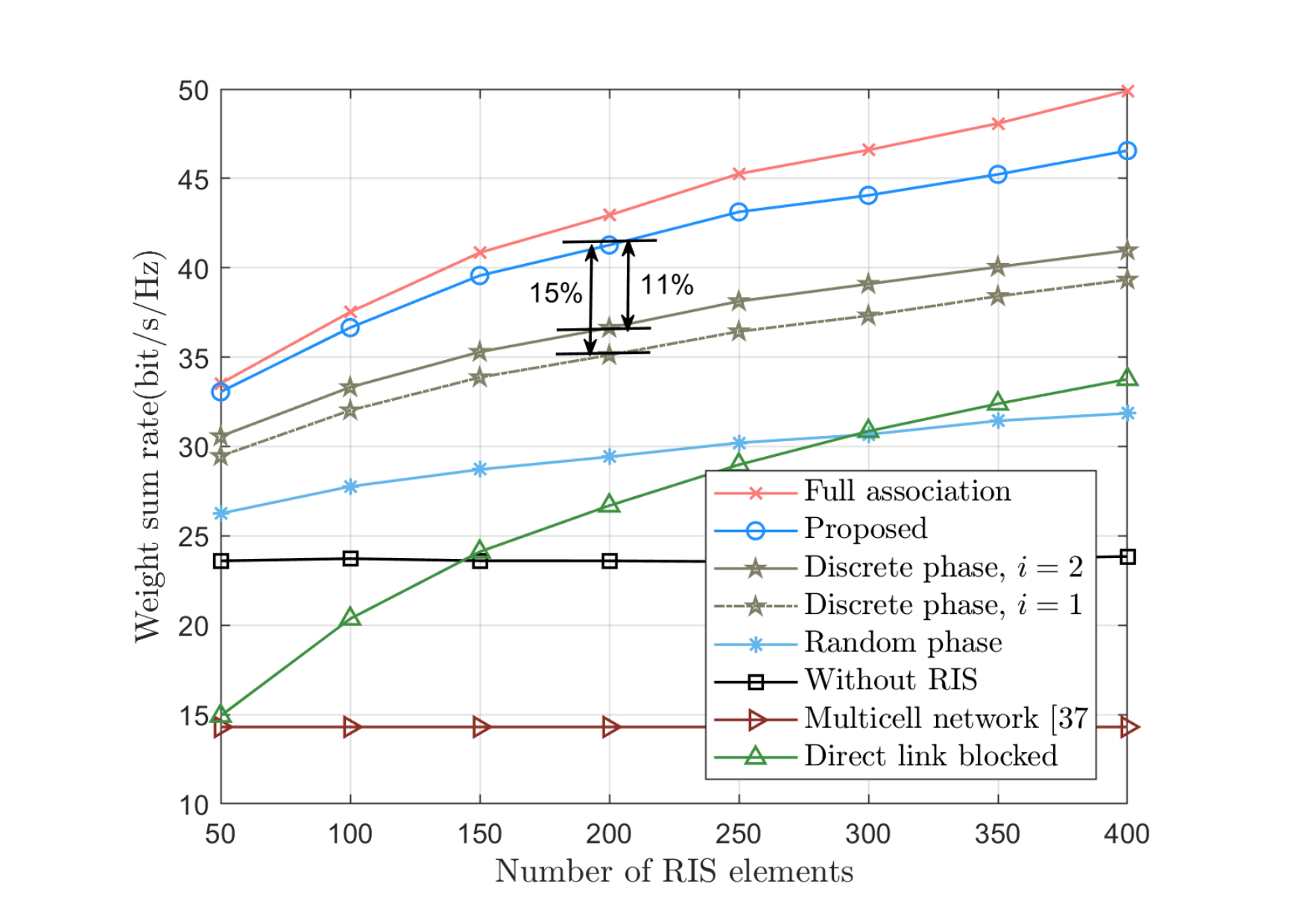}
    \caption{Weighted sum rate versus the number of RIS elements. }
    \label{RIS_element}
\end{figure}

In Fig. \ref{RIS_element}, we investigate the effect of the number of RIS elements on the achievable weighted sum rate. Undoubtedly, the increase of RIS elements will make the system performance better. On the other hand, as the number of RIS elements increases, the gap between ``\textit{Full association}" and ``\textit{Proposed}" becomes progressively larger.  This is because an increase in the number of RIS elements introduces more channels, resulting in greater performance loss due to RIS-UE association. 
When the RIS phase is discrete, performance is slightly degraded compared to continuous RIS, however, this degradation can be effectively reduced with higher phase resolution.
When $N=200$ and the phase resolution $i=1$, the system performance is lost by 15\%, while when the phase resolution $i=2$, the system performance is only lost by 11\%.
What is more, the performance loss increases with the number of RIS elements as the phase precision of the RIS decreases, which can be verified by comparing the curves ``\textit{Discrete phase, $i=2$}" and ``\textit{Discrete phase, $i=1$}". However, this can reduce hardware and power overhead.  
It can also be seen that curve ``\textit{Direct link blocked}" gradually exceeds curve ``\textit{Without RIS}" as the number of RIS elements increases, which is due to the fact that the gain of the reflected channel gradually becomes larger.

\begin{figure}
    \centering
    \includegraphics[width=0.8\linewidth]{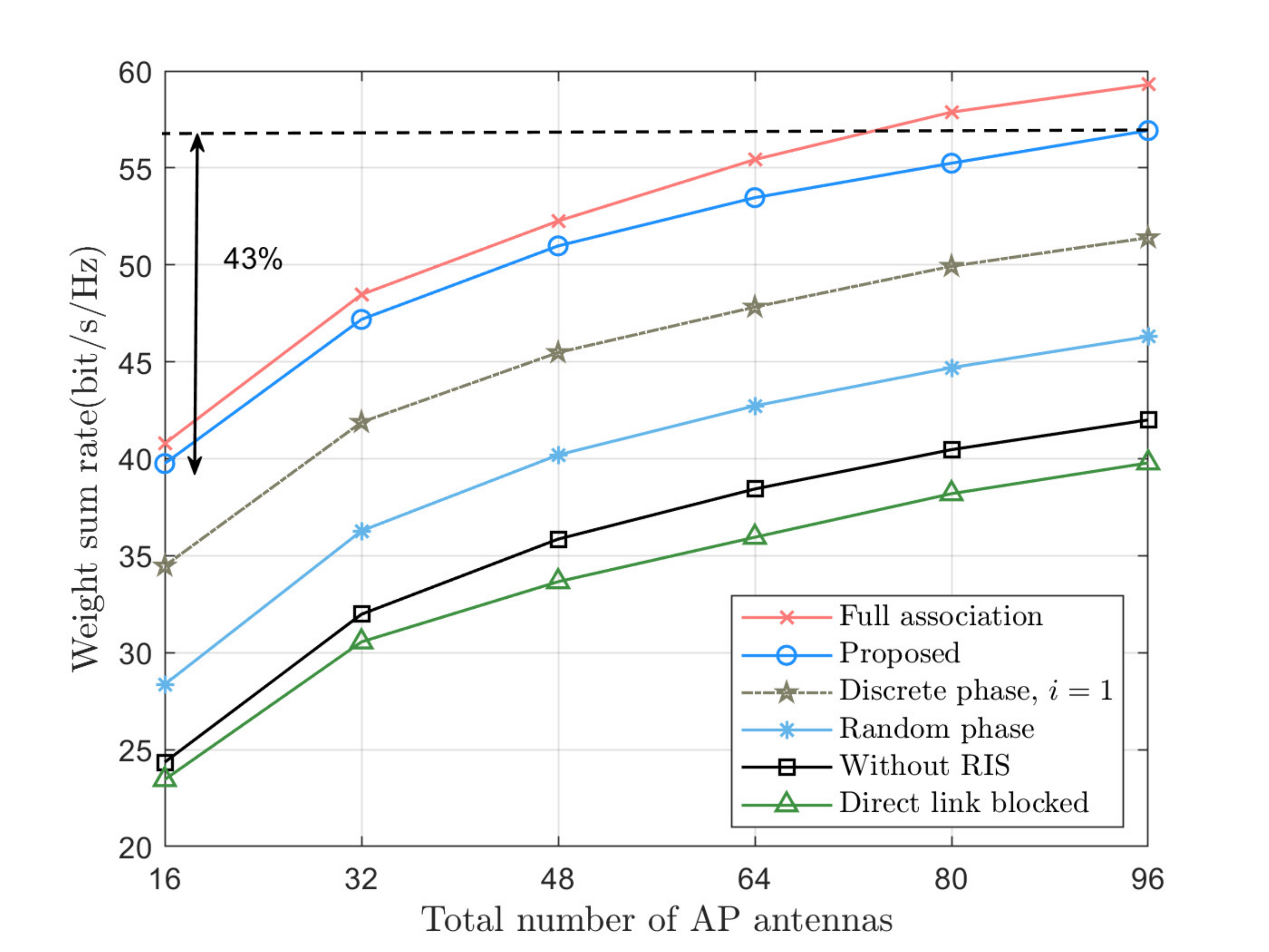}
    \caption{Weighted sum rate versus AP antennas.}
    \label{AP_antenna}
\end{figure}

Fig. \ref{AP_antenna} illustrates the correlation between the weighted sum rate and the total number of AP antennas, where the total number of AP antennas is equal to $BN_t$. 
From Fig. 7, we see that all the curves increase with the number of AP antennas, since more antennas lead to more spatial gain. When the total number of AP antennas increases from 16 to 96, the ``\textit{Proposed}" demonstrated a 43\% improvement in performance. Moreover, as the total number of AP antennas increases, the performance gain of the system begins to diminish. This is likely due to the law of diminishing returns, where each additional antenna contributes less to overall performance improvement as the system approaches its optimal capacity. 

\subsection{Impact of RIS Deployment}

\begin{figure}
    \centering
    \includegraphics[width=0.8\linewidth]{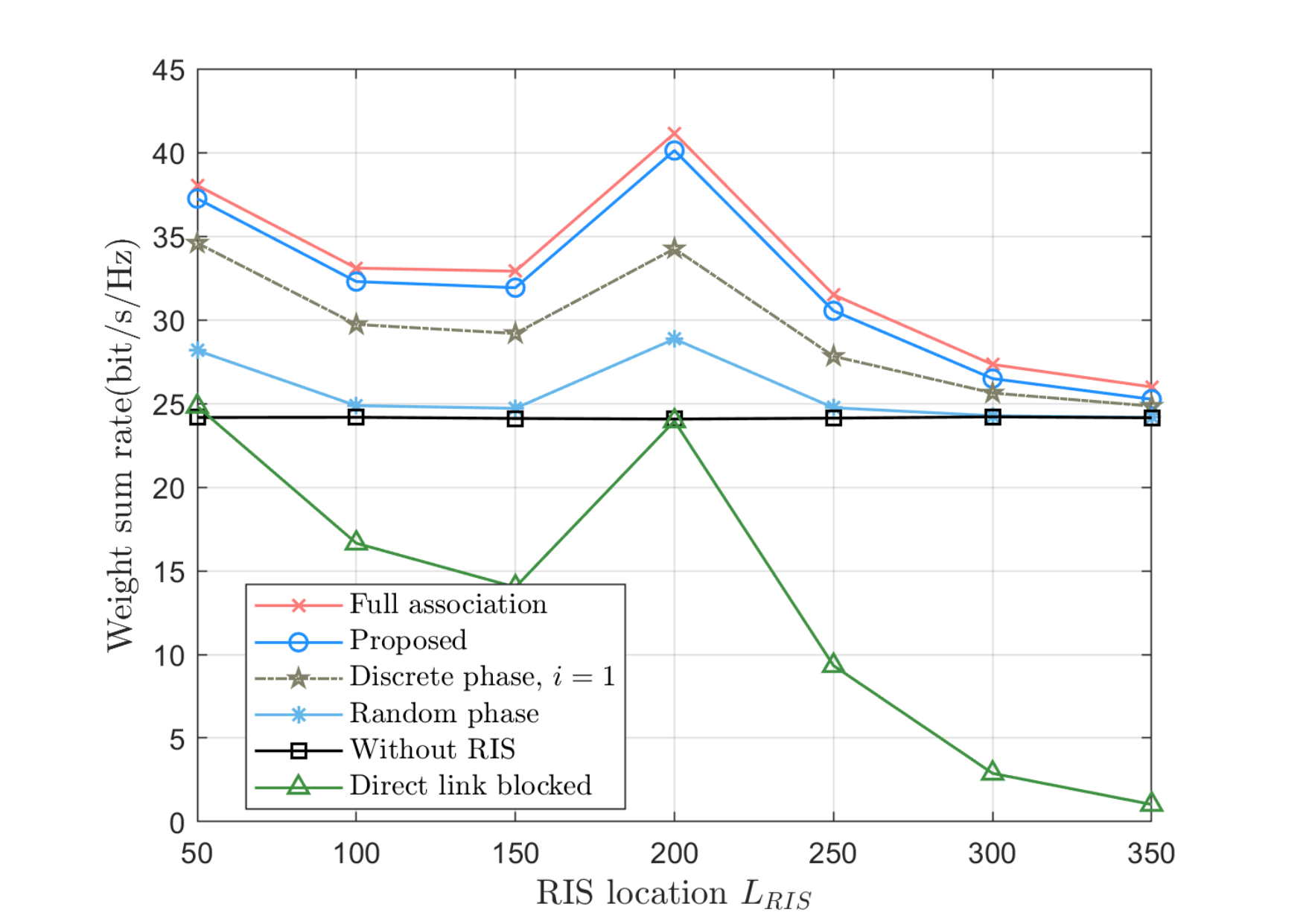}
    \caption{Weighted sum rate versus the RIS deployment. }
    \label{RIS_distance}
\end{figure}

{In Fig. \ref{RIS_distance}, we explore the impact of RIS deployment on system performance,  where the horizontal coordinate is the diameter $L_{\mathrm{RIS}}$ of the circle in which the RISs are located. Two peaks are visible: one at $50$ m and another at $200$ m, especially noticeable at the line ``\textit{Direct link blocked}". The initial peak occurs when the RISs are in proximity to the UEs, and the second peak occurs when the RISs are in the middle of APs and UEs. When RISs are far away from APs and UEs, i.e., $L_{RIS}>300 $ m, the performance of the system drops drastically. Therefore, it can be concluded that the RISs are best deployed in the middle of the APs and the UEs. Additionally, the RISs should be positioned closer to the UEs, which will result in greater performance gains. Furthermore, we specifically note that when the RISs are close to the UEs, ``\textit{Random phase}" also provides a partial performance gain compared to ``\textit{Without RIS}". This may be due to the fact that the UEs are able to receive stronger reflected signals from the RISs even when the RIS phase shifts are not optimized.}

\subsection{CDF of Weight Sum Rate}

\begin{figure}
    \centering
    \includegraphics[width=0.8\linewidth]{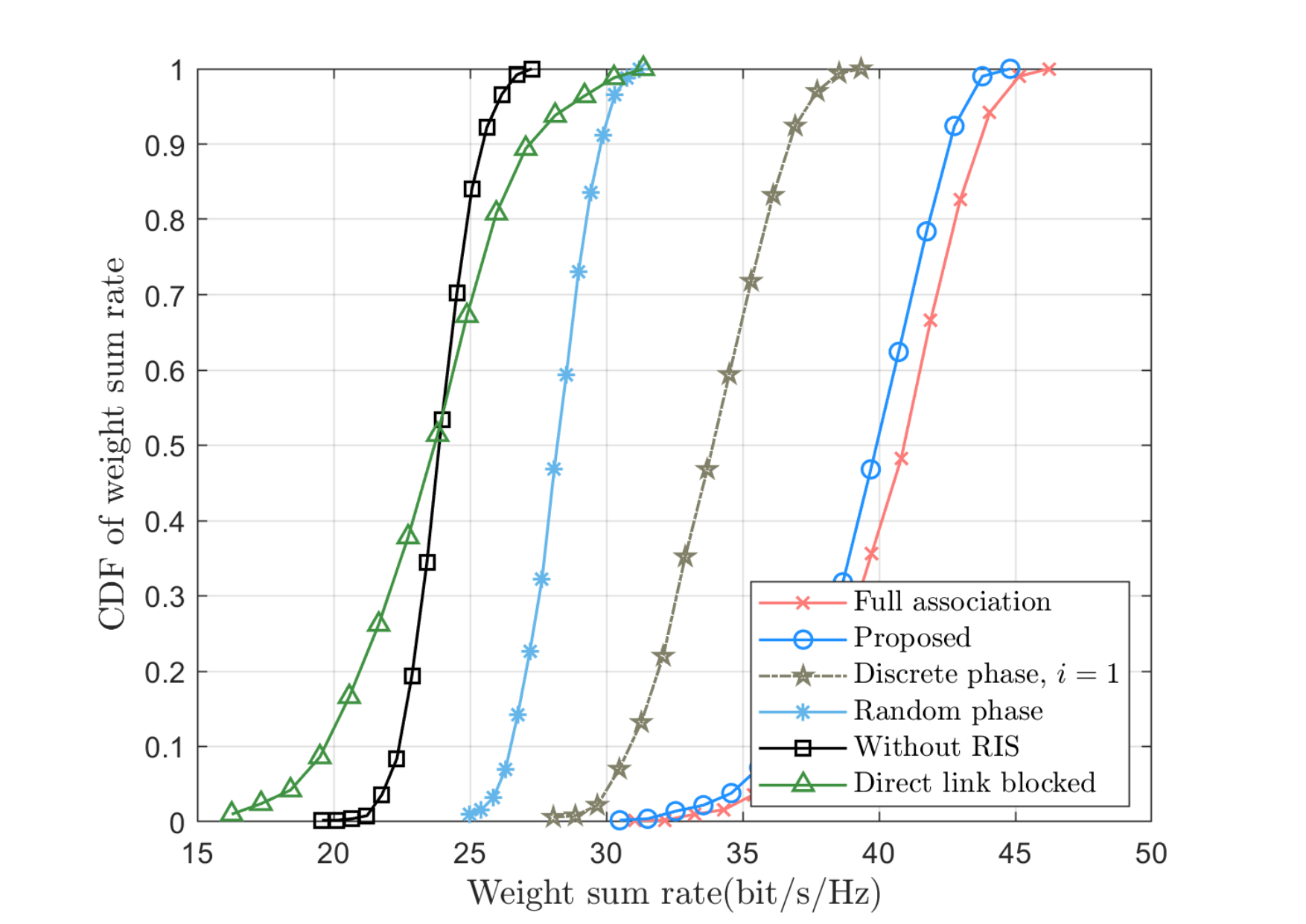}
    \caption{{CDF of weight sum rate.}}
    \label{fig8}
\end{figure}

In Fig. \ref{fig8}, we show the  cumulative distribution function (CDF) of the weighted sum rate. It can be seen that the ``\textit{Proposed}" has a $78.4$\% probability of achieving a weighted sum rate of at least $41.7$ bps/Hz, which is slightly lower compared to the ``\textit{Full association}". This illustrates that the RIS-UE association does not lead to significant performance loss for the RIS-assisted CF network. Moreover, the maximum weight sum rate of the ``\textit{Proposed}" is about $162$\% of that of the ``\textit{Without RIS}", demonstrating the benefits of combining RIS with CF-MIMO. However, in systems with a large number of RIS elements, obtaining complete CSI, including both LoS and NLoS paths in Rician channels, is often impractical. In such cases, LoS path information can be inferred from UEs' locations, which facilitates transmission optimization. The ``\textit{Proposed, LoS-only, $\beta_{\mathrm{AR}}=\beta_{\mathrm{RU}}=3$}'' curve illustrates the algorithm's performance when only LoS path information in the AP-RIS and RIS-UE channels is available, while NLoS path information remains unknown. Compared to the ``\textit{Proposed}'', which assumes full knowledge of both LoS and NLoS paths, the LoS-only scheme shows a performance degradation of 1 to 2 bps/Hz, which remains within an acceptable range. Furthermore, as $\beta_{\mathrm{AR}}$ and $\beta_{\mathrm{RU}}$ increase, indicating a higher proportion of LoS paths, the performance gap between the LoS-only and complete CSI-aware schemes narrows. This suggests that, in the absence of complete CSI, an increased proportion of LoS paths enhances the algorithm's performance even when only user locations are known.

\subsection{Robustness to imperfect CSI}

\begin{figure}
    \centering
    \includegraphics[width=0.8\linewidth]{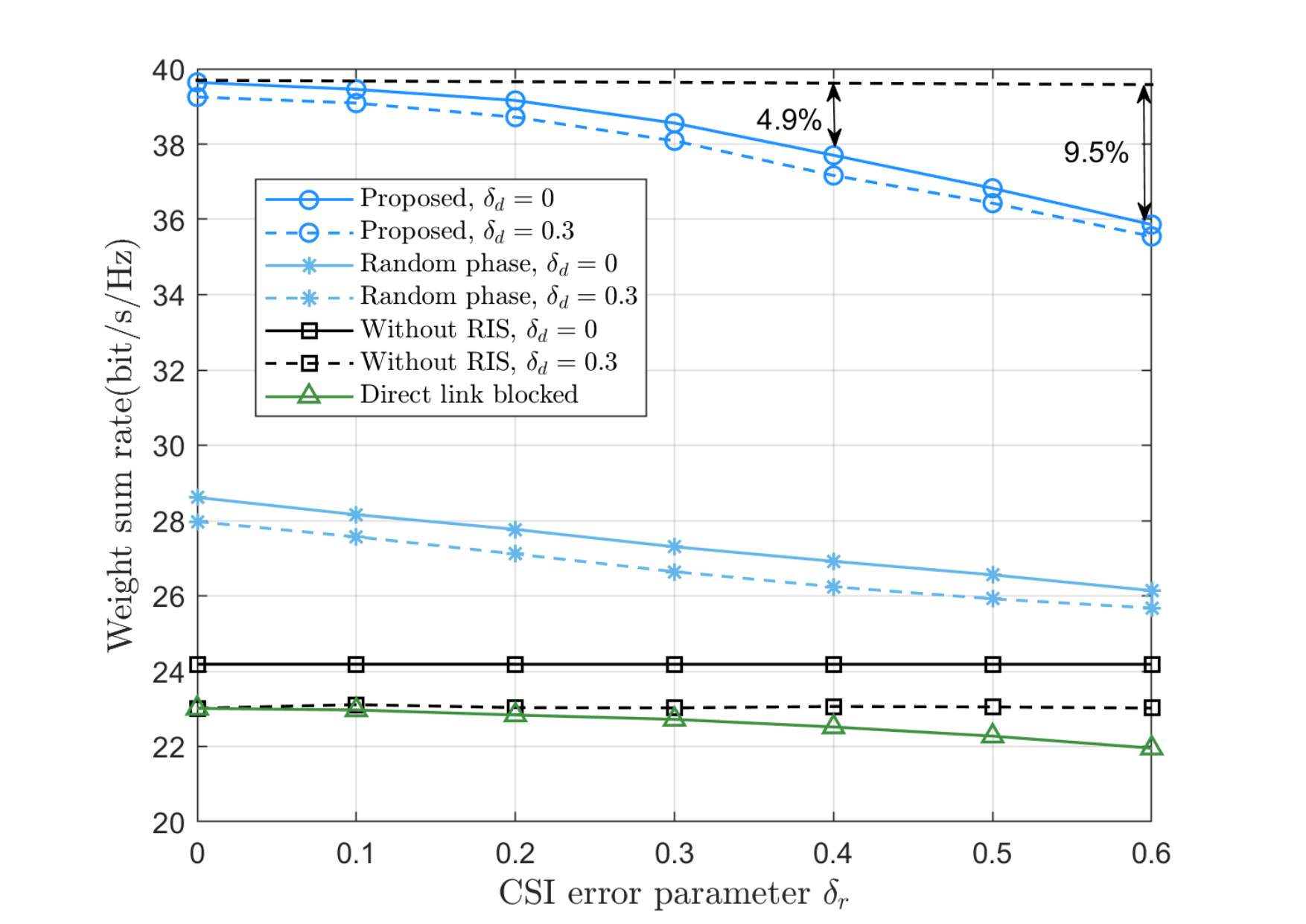}
    \caption{{Weighted sum rate against the CSI error parameter $\delta_r$ when $\delta_d=\left \{ 0,0.3 \right \} $.}}
    \label{CSI_error}
\end{figure}

{In the RIS-assisted CF system, it is difficult to obtain perfect CSI due to the presence of a large number of antennas. Therefore, it is necessary to verify the robustness of the designed scheme against imperfect CSI. The estimated channel is modeled as $\hat{h}=h +e $ \cite{twotimescale,imperfectCSI}, where $h$ denotes the perfect channel and $e$ signifies the estimation error, which follows a Gaussian distribution with zero mean, i.e., $e \sim \mathcal{CN}\left(0,\sigma_{e}^2\right)$. We define $\sigma_{e}^2=\delta \left | h \right | ^2$, where $\delta$ represents the ratio of the estimation error to the channel gain $\left | h \right | ^2$. Furthermore, $\delta_r$ is the error ratio for the cascaded channel (AP-RIS channel and RIS-UE channel) and $\delta_d$  is the error ratio for the direct channel (AP-UE channel). In Fig. \ref{CSI_error}, we plot the WSR against the CSI error parameter $\delta_r$. It is observed that as $\delta_r$ increases, as expected, the performance of the proposed scheme is affected. When $\delta_r=0.4$, the proposed method experiences a $4.9$\% degradation in terms of the WSR, and at $\delta_r=0.6$, the WSR degradation expands to $9.5$\%. Additionally, when $\delta_d$ changes from $0$ to $0.3$, the system experiences only a slight decrease in performance. This indicates that our solution demonstrates strong robustness against small-to-moderate CSI errors which usually occur in current wireless networks\cite{CSI_error3}. }

\subsection{Impact of the Weights}

\begin{figure}[t]
    \centering
     \includegraphics[width=0.8\linewidth]{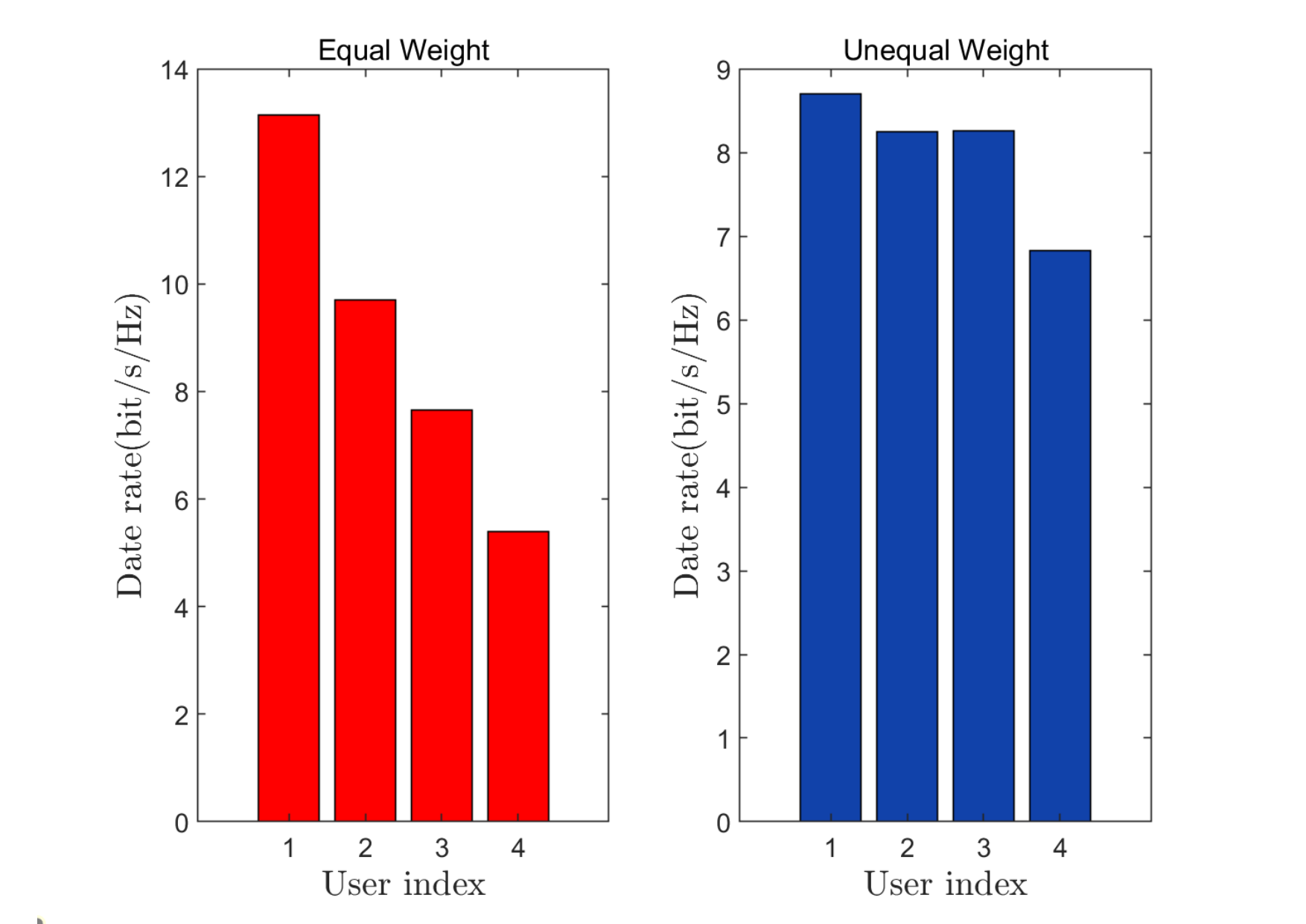}
    \caption{{ Individual data rate under two sets of weights.}}
    \label{weights}
\end{figure}

{Weight factor $\omega$ can be used to control fairness among users. To illustrate this better, we provide an example of a pure cell-free system (without RIS). We set $B=2$ and $K=4$. Two APs are located at $(-30,0,10)$ and $(30,0,10)$, while four users are positioned at $(75,0,1.5)$, $(-100,0,1.5)$, $(150,0,1.5)$, and $(-175,0,1.5)$, with indices $1$, $2$, $3$, and $4$, respectively. Fig. \ref{weights} depicts the individual data rates under two sets of weight factors. For equal weights, we set $\omega=0.5$ for each UE. For unequal weights, we set $\omega_{1}=0.1$, $\omega_{2}=0.25$, $\omega_{3}=0.65$, and $\omega_{4}=0.9$. It can be observed that when the weights are equal, UEs closer to the AP achieve higher data rates, and there is a significant difference in rates among UEs. By allocating greater weight factors to UEs farther away from the AP, we can promote fairness among UEs and attain a more equitable distribution of data rates.}

\section{Conclusion}\label{secfive}   
In this article, we address RIS-UE association in the design of a RIS-assisted downlink CF MIMO network. By jointly optimizing active and passive beamforming along with RIS-UE association variables, we aim to maximize the WSR. To tackle this complex MINLP problem, we propose a two-stage framework comprising a many-to-many matching algorithm for RIS-UE association, an MM algorithm for RIS phase shift optimization, and a joint BD algorithm based on the bisection method for AP beamforming. Simulation results demonstrate the efficacy of the proposed algorithm across diverse scenarios and underscore the essential role of RIS-UE association in balancing performance and channel acquisition overhead within cell-free networks.
In future work, in addition to considering the RIS-UE association, we should also examine the impact of the AP-UE association on the system to reduce the overhead caused by APs exchanging information with each other.
\section*{Appendix}
\section*{Proof of Lemma 1}
 We expand $\mathbf{x}^\mathrm{\mathit{H} } \mathbf{L} \mathbf{x}$ and write it in the form:
 \begin{align}
 \label{proof3.1} \mathbf{x}^\mathrm{\mathit{H} } \mathbf{L} \mathbf{x}    &=  (\mathbf{x} - \mathbf{x}_{(t)} + \mathbf{x}_{(t)})^\mathrm{\mathit{H} } \mathbf{L} (\mathbf{x} - \mathbf{x}_{(t)} + \mathbf{x}_{(t)})   \notag \\
& = (\mathbf{x} - \mathbf{x}_{(t)})^\mathrm{\mathit{H} } \mathbf{L} (\mathbf{x} - \mathbf{x}_{(t)}) + (\mathbf{x} - \mathbf{x}_{(t)})^\mathrm{\mathit{H} } \mathbf{L} \mathbf{x}_{(t)}  + \mathbf{x}_{(t)}^\mathrm{\mathit{H} } \mathbf{L} (\mathbf{x} - \mathbf{x}_{(t)}) + \mathbf{x}_{(t)}^\mathrm{\mathit{H} }  \mathbf{L} \mathbf{x}_{(t)}  ,
 \end{align}
since $\mathbf{L}$ is a semi-positive definite matrix, we have
\begin{equation}
 \label{proof3.2}  (\mathbf{x} - \mathbf{x}_{(t)})^\mathrm{\mathit{H} } \mathbf{L} (\mathbf{x} - \mathbf{x}_{(t)}) \geq 0 . 
\end{equation}
Substituting (\ref{proof3.2}) into (\ref{proof3.1}), we can deduce that (\ref{lemma3}) in Lemma 3 is true. 
\section*{Proof of Lemma 3}
 We first define a function $g_{b}\left(\mu_{b}\right)=\left ( \mu_{b} \mathbf{Z}_{b, k}^{1}+\mathbf{Z}_{b, k}^{2} \right ) ^{-\frac{1}{2}}$, thus the (\ref{f_mu_b}) can be re-expressed as 
\begin{equation}
    f_{b}\left(\mu_{b}\right)=\operatorname{Tr}\left ( { \sum_{k=1}^{K}} \mathbf{Z}_{b, k}^{1} g_{b}\left(\mu_{b}\right) \tilde{\mathbf{S}}_{k} g_{b}\left(\mu_{b}\right) \right ) .
\end{equation}
From the basic derivative chain rule and the fundamental properties of matrix trace, the derivative of $f_{b}\left(\mu_{b}\right)$ with respect to $\mu_{b}$ is
\begin{small}
\begin{align}
 \frac{\mathrm{d}  f_{b}\left(\mu_{b}\right)}{\mathrm{d} \mu_{b}} &=\operatorname{Tr}\left(\frac{\mathrm{d}}{\mathrm{d} \mu_{b}}\left(\sum_{k=1}^{K} \mathbf{Z}_{b, k}^{1} g_{b}\left(\mu_{b}\right) \tilde{\mathbf{S}}_{k} g_{b}\left(\mu_{b}\right)\right)\right) \notag\\
&=\operatorname{Tr}\left(\sum_{k=1}^{K} \left(\mathbf{Z}_{b, k}^{1} \frac{\mathrm{d} g_{b}\left(\mu_{b}\right)}{\mathrm{d} \mu_{b}} \tilde{\mathbf{S}}_{k} g_{b}\left(\mu_{b}\right) 
 + \mathbf{Z}_{b, k}^{1} g_{b}\left(\mu_{b}\right) \tilde{\mathbf{S}}_{k} \frac{\mathrm{d} g_{b}\left(\mu_{b}\right)}{\mathrm{d} \mu_{b}}\right)\right). \label{df}
\end{align}
\end{small}
The derivative of $g_{b}\left(\mu_{b}\right)$ with respect to $\mu_{b}$, which can be formulated as
\begin{equation}
  \label{dg}   \frac{\mathrm{d}  g_{b}\left(\mu_{b}\right)}{\mathrm{d} \mu_{b}}=-\frac{1}{2}\left(\mu_{b} \mathbf{Z}_{b, k}^{1}+\mathbf{Z}_{b, k}^{2}\right)^{-\frac{3}{2}} \mathbf{Z}_{b, k}^{1}. 
\end{equation}
It is easy to show that both $\mathbf{Z}_{b, k}^{1}$ and $\left(\mu_{b} \mathbf{Z}_{b, k}^{1}+\mathbf{Z}_{b, k}^{2}\right)$  are positive definite matrices. While we substitute (\ref{dg}) into (\ref{df}), it is evident from the properties of positive definite matrices that $\frac{\mathrm{d} f_{b}\left(\mu_{b}\right)}{\mathrm{d} \mu_{b}} < 0$  when $\mu_{b} \ge 0$.

\bibliographystyle{IEEEtran}
\bibliography{IEEEabrv,mylib}

\vfill
                           
\end{document}